\newcommand{\subscript}[2]{$#1 _ #2$}
\newtheorem{theorem}{Theorem}[section]
\newtheorem{lemma}[theorem]{Lemma}
\newtheorem{proposition}[theorem]{Proposition}
\theoremstyle{definition}
\newtheorem{definition}[theorem]{Definition}
\theoremstyle{remark}
\newtheorem{remark}[theorem]{Remark}
\newcommand{\R}{{\mathbb{R}}}
\DeclareMathOperator{\conv}{conv}
\DeclareMathOperator{\diag}{diag}
\title{Learning to control from expert demonstrations*}
\thanks{*This work was partially supported by the NSF grant 1705135 and by the CONIX Research Center, one of six centers in JUMP, a Semiconductor Research
Corporation (SRC) program sponsored by DARPA.
}
\author[A. Sultangazin, L. Pannocchi, L. Fraile, and P. Tabuada]{Alimzhan Sultangazin, Luigi Pannocchi, Lucas Fraile, and Paulo Tabuada}
\address{Department of Electrical Engineering\\
University of California at Los Angeles,
Los Angeles, CA 90095}
\email{\{asultangazin, lpannocchi, lfrailev, tabuada\}@ucla.edu}
\begin{document}

\maketitle
\thispagestyle{empty}
\pagestyle{empty}
\begin{abstract}
In this paper, we revisit the problem of learning a stabilizing controller from a finite number of demonstrations by an expert. By first focusing on feedback linearizable systems, we show how to combine expert demonstrations into a stabilizing controller, provided that demonstrations are sufficiently long and there are at least $n+1$ of them, where $n$ is the number of states of the system being controlled. When we have more than $n+1$ demonstrations, we discuss how to optimally choose the best $n+1$ demonstrations to construct the stabilizing controller. We then extend these results to a class of systems that can be embedded into a higher-dimensional system containing a chain of integrators. The feasibility of the proposed algorithm is demonstrated by applying it on a CrazyFlie 2.0 quadrotor.
\end{abstract}

\section{Introduction}

\subsection{Motivation}

The usefulness of learning from demonstrations has been well-argued in the literature (see \cite{ Argall2009, Billard2016, Ravichandar2020}). In the context of control, imagine that we need to design a controller for an autonomous car that prioritizes comfort of its passengers. 
It is not obvious how to capture the idea of comfortable driving in a mathematical expression. It is fairly straightforward, however, to collect demonstrations of comfortable driving from human drivers.
There are many other control tasks where providing examples of the desired behaviour is easier than defining such behaviour formally (e.g., teaching a robot to manipulate objects). The growing research interest in learning from demonstrations (LfD) for robot control \cite{Ravichandar2020} reflects the need for a well-defined controller design methodology for such tasks. In this work, we propose a methodology that uses expert demonstrations to construct a stabilizing controller.

There are many examples in the literature, where various LfD methodologies have been applied to robots \cite{Ravichandar2020}. The most popular application of LfD so far is in robotic manipulators. More specifically, LfD is used to teach manipulators skills to perform tasks in manufacturing \cite{Vogt2017}, health-care \cite{vandenBerg2010, Lauretti2017}, and human-robot interaction \cite{Maeda20, RAVICHANDAR2019a}. In addition, LfD has been applied with significant success to ground vehicles \cite{Codevilla2018, Pan2020}, aerial vehicles \cite{Kaufmann2020, Abbeel2010}, bipedal robots \cite{Farchy2013, Mericli2010}, and quadrupedal robots \cite{Kolter2008, Nakanishi2004}. These examples illustrate that, for these platforms, there exist control tasks 
for which LfD techniques are preferable to traditional control approaches.

\subsection{Related work}
In this section, we describe the previous work in learning from demonstrations to indicate where our approach lies within the existing landscape. This is in no way a comprehensive account of the literature on learning from demonstrations, but rather an overview of approaches related to ours 
(please refer to \cite{Ravichandar2020} or \cite{Kroemer2019} for a description of the literature on LfD).

\emph{Policy-learning LfD methods}, to which this work belongs, assume that there exists a mapping from state (or observations) to control input that dictates the expert's behaviour. This mapping is referred to as the expert's policy. The goal of these methods is to find (or approximate) the expert's policy given expert demonstrations. In many machine-learning-based LfD methods, policy learning is viewed as a supervised-learning problem where states and control inputs are treated as features and labels, respectively. We refer to these methods as \emph{behavioural cloning} methods. 
Pioneered in the 80s by works like \cite{Pomerleau1989}, this class of methods is still popular today. Behavioural cloning methods are typically agnostic to the nature of the expert --- demonstrations can be provided by a human (see \cite{Bojarski2016, Codevilla2018}), an offline optimal controller (see \cite{Levine2014, Chen2018}), or a controller with access to privileged state information (see \cite{Chen2019, Kaufmann2020}). They do, however, require a large number of demonstrations to work well in practice and, if trained solely on data from unmodified expert demonstrations, generate unstable policies that cannot recover from drifts or disturbances \cite{Codevilla2018}. The latter problem can be fixed using online meta-algorithms like DAgger \cite{Ross2011} which ensure that training data includes observations of recoveries from perturbations. Using such algorithms, however, comes at the expense of enlarging the training dataset. Moreover, the works on behavioural cloning typically provide few formal stability guarantees and, instead, illustrate performance with experiments. 

Currently, there is a concerted effort to develop policy-learning LfD methods that improve on existing techniques using tools from control theory. 
In that context, the work that is closest to ours is described in \cite{Boyd2020}, where the authors use convex optimization to construct a linear policy that is both close to expert demonstrations and stabilizes a linear system. They guarantee that the resulting controller is optimal with respect to some quadratic cost by adding an additional set of constraints (originally proposed in \cite{Kalman1964}) to the optimization problem. This work has been extended in \cite{Havens2021} to enforce other properties, such as stability, optimality, and $\mathcal{H}_{\infty}$-robustness.
Our methodology is different from those in \cite{Boyd2020} and \cite{Havens2021} because we do not assume the expert to be a linear time-invariant controller.

\subsection{Contributions}
In this paper, we propose a methodology for constructing a controller for a known nonlinear system from a finite number of expert demonstrations of desired behaviour, provided their number exceeds the number of states and the demonstrations are sufficiently long. Our approach consists of two steps:
\begin{itemize}
    \item use feedback linearization to transform the nonlinear system into a chain of integrators;
    \item use affine combinations of demonstrations in the transformed coordinates to construct a control law stabilizing the original system.
\end{itemize}
The expert demonstrations are assumed to be of finite-length, whereas the resulting controller is expected to control the system indefinitely, making this a non-trivial problem to address. In this paper, we formally prove the learned controller asymptotically stabilizes the system. Furthermore, in case there are more demonstrations than states, we determine which subset of demonstrations needs to be chosen to minimize the error between the trajectory of the learned controller and the trajectory of the expert controller. To demonstrate the feasibility of this methodology, we apply it to the problem of quadrotor control. Unlike \cite{Boyd2020}, our methodology produces a controller that is time-varying and \emph{not} linear in the original coordinates. This reflects our belief that, in many cases, the expert demonstration is produced by a nonlinear controller. We also extend the proposed methodology beyond the class of feedback linearizable systems by using the embedding technique described in \cite{Astolfi2018} and demonstrate its feasibility on the classical example of the ball-and-beam system.


A preliminary version of this methodology was introduced in \cite{sultangazin2020}. In \cite{Sultangazin2021}, it was combined together with the data-driven control results from \cite{lucas2020} to learn to control unknown SISO systems from demonstrations. This paper provides a unified presentation of the results from \cite{sultangazin2020}, as well as several new results, such as the discussion on the optimality of the controller approximation error and the extension of the results beyond the class of feedback linearizable systems.

\color{black}
\section{Problem Statement and Preliminaries}
\subsection{Notations and basic definitions}
The notation used in this paper is fairly standard. The integers are denoted by $\mathbb{Z}$, the natural numbers, including zero, by $\mathbb{N}_0$, the real numbers by $\mathbb{R}$, the positive real numbers by $\R^+$, and the non-negative real numbers by $\mathbb{R}^+_0$. We denote by $\|\cdot\|$ (or by $\|\cdot\|_2$) the standard Euclidean norm or the induced matrix 2-norm; and by $\|\cdot \|_F$ the matrix Frobenius norm. A set of vectors $\{v_1, \hdots, v_k\}$ in $\R^n$ is \emph{affinely independent} if the set $\{v_2 - v_1, \hdots, v_k - v_1 \}$ is linearly independent.

A function $\alpha: \R^+_0 \rightarrow \R^+_0$ is of class $\mathcal{K}$ if $\alpha$ is continuous, strictly increasing, and $\alpha(0) = 0$. If $\alpha$ is also unbounded, it is of class $\mathcal{K}_\infty$.
A function $\beta: \R^+_0 \times \R^+_0 \rightarrow \R^+_0$ is of class $\mathcal{KL}$ if, for fixed $t \geq 0$, $\beta(\cdot, t)$ is of class $\mathcal{K}$ and $\beta(r, \cdot)$ decreases to $0$ as $t \rightarrow \infty$ for each fixed $r \geq 0$.


The Lie derivative of a function $h: \R^n \rightarrow \R$ along a vector field $f: \R^n \rightarrow \R^n$, given by $\frac{\partial h}{\partial x} f$, is denoted by $L_fh$. We use the notation $L_f^kh$ for the iterated Lie derivative, i.e., \mbox{$L_f^k h = L_f( L_f^{k-1}h)$}, with $L^0_f h = h$. Given open sets $U \subseteq \R^n$ and $V \subseteq \R^n$, a smooth map $\Phi: U \rightarrow V$ is called a diffeomorphism from $U$ to $V$ if it is a bijection and its inverse $\Phi^{-1}: V \rightarrow U$ is smooth.

Consider the continuous-time system:
\begin{align}
    \label{eq:general_system}
    \dot{x} = f(t,x),
\end{align}
where $x \in \R^n$ is the state and $f: \R^+_0 \times \R^n \rightarrow \R^n$ is a smooth function. The origin of \eqref{eq:general_system} is \emph{uniformly asymptotically stable} if there exist $\beta \in \mathcal{KL}$ and $c > 0$ such that, for all $\|x(t_0)\| < c$, the following is satisfied \cite{Khalil2002}:
\begin{align}
\label{eq:asym_stability}
    \|x(t)\| \leq \beta(\|x(t_0)\|,t-t_0), \quad \forall t \geq t_0 \geq 0.
\end{align}

Consider the continuous-time control system:
\begin{align}
\label{eq:general_system_inputs}
    \dot{x} = f(t,x,u),
\end{align}
where $x \in \R^n$ is the state, $u \in \R^m$ is the input, and \mbox{$f: \R^+_0 \times \R^n \times \R^m \rightarrow \R^n$} is a smooth function. The system \eqref{eq:general_system_inputs} is said to be \emph{input-to-state stable (ISS)} if there exist $\beta \in \mathcal{KL}$ and $\gamma \in \mathcal{K}$ such that for any $x(t_0) \in \R^n$ and any bounded input $u: [t_0, \infty) \rightarrow \R^m$, the following is satisfied:
\begin{align}
\label{eq:input_to_state}
    \| x(t) \| \leq \beta(\|x(t_0)\|, t - t_0) + \gamma\left( \sup_{t_0 \leq \tau \leq t} \| u(\tau) \| \right).
\end{align}

Let $\mathcal{X} = \{ x_1, \hdots, x_k \}$ be a set of points in $\R^n$. A point $x = \sum_{i=1}^k \theta_i x_i$ with $\sum_{i=1}^k \theta_i = 1$ is called an \emph{affine combination} of points in $\mathcal{X}$. If, in addition, $\theta_i \geq 0$ for all $i \in \{1, \hdots, k\}$, then $x$ is a \emph{convex combination} of points in $\mathcal{X}$.


\subsection{Problem Statement}
Consider a known continuous-time control-affine system:
\begin{align}
\label{eq:system}
    \Sigma: \quad \dot{x} = f(x) + g(x)u,
\end{align}
where $x \in \R^n$ and $u \in \R^m$ are the state and the input, respectively; and $f:\R^n \rightarrow \R^n$, $g: \R^n \rightarrow \R^{n \times m}$ are smooth functions. Assume that the origin is an equilibrium point of \eqref{eq:system}. We call a pair $(x,u): \R^+_0 \to \R^n \times \R^m$ a solution of the system \eqref{eq:system} if, for all $t \in \R^+_0$, the equation \eqref{eq:system} is satisfied. Furthermore, we refer to the functions $x$ and $u$ as a trajectory and a control input of the system \eqref{eq:system}.

We say that a controller $k: \R^n \rightarrow \R^m$ is \emph{asymptotically stabilizing} for the system \eqref{eq:system} if the origin is uniformly asymptotically stable for the system \eqref{eq:system} with $u = k(x)$. Suppose there exists an unknown asymptotically stabilizing controller $k$, which we call the expert controller. We assume that $k$ is smooth. Our goal is to learn a controller $\widehat{k}: \R_0^+ \times \R^n \rightarrow \R^m$ such that having $u = \widehat{k}(t,x)$ asymptotically stabilizes the origin of the system \eqref{eq:system}. Towards this goal, we use a set of $M$ finite-length expert solutions \mbox{$\mathcal{D} = \{ (x^i, u^i) \}_{i = 1}^M$} of \eqref{eq:system}, where: for each $i$, the trajectory $x^i: [0, T] \rightarrow \R^n$ and the control input \mbox{$u^i: [0, T] \rightarrow \R^m$} are smooth and satisfy $u^i(t) = k(x^i(t))$ for all $t \in \R_0^+$; $T \in \R_0^+$ is the length of a solution; and $M \geq n+1$. We also ascertain that the ``trivial'' expert solution, wherein $x(t) = 0$ and $u(t) = 0$ for all $t \in [0,T]$, is included in $\mathcal{D}$.
\begin{remark}
In practice, we can record the values of continuous solutions provided by the expert only at certain sampling instants. In this work, however, we choose to work in continuous-time to simplify the theoretical analysis. We can do this without sacrificing practical applicability because it is well-known that continuous-time controller designs can be implemented via emulation and still guarantee stability \cite{Nesic2008}.
\end{remark}

We make the assumption that the system \eqref{eq:system} is feedback linearizable on an open set $U \subseteq \R^n$ containing the origin and the expert demonstrations $x^i(t)$ belong to $U$ for all $t \in [0,T]$. To avoid the cumbersome notation that comes with feedback linearization of multiple-input systems, we assume that $m = 1$, that is, the system \eqref{eq:system} only has a single input. Readers familiar with feedback linearization can verify that all the results extend to multiple-input case, mutatis mutandis (refer to \cite[Ch. 4-5]{Isidori1995} for a complete introduction to feedback linearization). In the single-input case, the system \eqref{eq:system} is feedback linearizable on the open set $U \subseteq \R^n$ if there is an output function \mbox{$h: \R^n \rightarrow \R$} that has relative degree $n$, i.e., for all $x \in U$, $L_gL_f^ih(x) = 0$ for $i = 0, \hdots, n-2$ and $L_gL_f^{n-1} h(x) \neq 0$. 
Moreover, the map: 
\begin{align}
\label{eq:feedback_lin_transformation}
    z = \Phi(x) = \begin{bmatrix} h(x) & L_f h(x) & \cdots & L_f^{n-1} h(x) \end{bmatrix}^T,
\end{align}
is a diffeomorphism from $U$ to its image $\Phi(U)$, i.e., the inverse $\Phi^{-1}: \Phi(U) \rightarrow U$ exists and is also smooth. We further assume, without loss of generality, that $h(0) = 0$.

\section{Learning a stabilizing controller from $n+1$ expert demonstrations}
\label{sec:n+1_demons}

Here, we describe the methodology for constructing an asymptotically stabilizing controller when $M = n+1$. We consider the case when $M \geq n+1$ in Section \ref{sec:more_than_n+1_demons}. 

\subsection{Feedback linearization}
Recall that using the feedback linearizability assumption, we can rewrite the  system dynamics \eqref{eq:system} in the coordinates given by \eqref{eq:feedback_lin_transformation}
resulting in:
\begin{equation}
\label{eq:system_before_feedback}
\begin{aligned}
    \dot{z}_1 &= z_2,\\
    &\vdots\\
    \dot{z}_{n-1} & = z_n,\\
    \dot{z}_n &= a(z) + b(z)u,
\end{aligned}
\end{equation}
where $a = \left( L_f^n h \right)\circ \Phi^{-1}$ and $b = \left( L_g L_f^{n-1}h \right) \circ \Phi^{-1}$. The feedback law:
\begin{equation}
    \begin{aligned}
    \label{eq:feedback}
    u & = {b(z)}^{-1}(-a(z) + v),
    \end{aligned}
\end{equation}
further transforms the system \eqref{eq:system} into the system given by:
\begin{align}
\label{eq:system_brunovsky}
    \dot{z} = Az + Bv,
\end{align}
where $(A,B)$ is a Brunovsky pair.
\begin{remark}
The expert controller $\kappa: \R^n \rightarrow \R$ in the $(z,v)$-coordinates is given by $\kappa(z) = a(z) + b(z)k(\Phi^{-1}(z))$. The smoothness of $k$ implies that the function $\kappa$ is also smooth.
\end{remark}


\subsection{Expert demonstrations}

Recall that the set of demonstrations $\mathcal{D}$ consists of solutions of the system \eqref{eq:system}. Using \eqref{eq:feedback_lin_transformation} and \eqref{eq:feedback}, we can represent the demonstrations $\mathcal{D}$ in  $(z, v)$-coordinates. We denote the resulting set by $\mathcal{D}_{(z,v)} = \{ (z^i, v^i) \}_{i = 1}^M$, where functions $z^i:[0, T] \rightarrow \R^n$ and \mbox{$v^i: [0, T] \rightarrow \R$} are given by:
\begin{align}
\label{eq:transform_z}
    z^i(t) &\triangleq \Phi(x^i(t))\\
\label{eq:transform_v}
    v^i(t) & \triangleq L_f^{n}h(x^i(t)) + L_gL_f^{n-1}h(x^i(t)) u^i(t),
 \end{align}
for all $i \in \{1, \cdots, M \}$ and for all $t \in [0, T]$. We define the set of demonstrations $\mathcal{D}_{(z,v)}$ evaluated at time $t$ as:
\begin{align*}
    \mathcal{D}_{(z,v)}(t) = \{ (z^i(t), v^i(t)) \}_{i=1}^M.
\end{align*}
It can be easily verified that the demonstrations in $\mathcal{D}_{(z,v)}$ satisfy the dynamics \eqref{eq:system_brunovsky} and $v^i(t) = \kappa(z^i(t))$.


\subsection{Constructing the learned controller}
We denote by $\widehat{\kappa}(t,z)$ the controller learned from the expert demonstrations. We begin by partitioning time into intervals of length $T$ and indexing these intervals with $p \in \mathbb{N}_0$. Let us construct the following matrices for $t \in [0, T]$:
\begin{align}
    \label{eq:Z_matrix_easy}
    Z(t) &\triangleq \left[ \begin{array}{@{}c|c|c@{}}  z^{2}(t) - z^{1}(t) & \cdots & z^{{n+1}}(t) - z^{1}(t) \end{array} \right]\\
    \label{eq:V_matrix_easy}
    V(t) &\triangleq \left[ \begin{array}{@{}c|c|c@{}}  v^{2}(t) - v^{1}(t) & \cdots & v^{{n+1}}(t) - v^{1}(t) \end{array} \right].
\end{align}
Our first attempt at constructing the learned controller, which we improve upon later, is to use the piecewise-continuous controller \mbox{$v(t) = \widehat{\kappa}(t,z(pT))$} for all $t \in [pT, (p+1)T]$, where:
\begin{align}
\label{eq:proposed_control_easy}
    \widehat{\kappa}(t,z(pT)) &= V(t-pT)\zeta(p), 
\end{align}
with $\zeta(p) = Z^{-1}(0) z(pT)$, and $Z(t), V(t)$ defined in \eqref{eq:Z_matrix_easy} and \eqref{eq:V_matrix_easy}, respectively.

The next lemma formally shows that an affine combination of trajectories of \eqref{eq:system_brunovsky} is a valid trajectory for \eqref{eq:system_brunovsky}. 

\begin{lemma}
\label{th:solution_linear_combination_easy}
Suppose we are given a set of finite-length solutions \mbox{$\{ (z^i, v^i) \}_{i = 1}^{n+1}$} of the system \eqref{eq:system_brunovsky}, where each $(z^i,v^i)$ is defined for $0 \leq t \leq T$, $T \in \R^+$.
Assume that $\{ z^i(0)\}_{i = 1}^{n+1}$ is an affinely independent set. Then, under the control law $v(t) = V(t-t_0) \zeta$ with $\zeta = Z^{-1}(0) z(t_0)$, the solution of the system \eqref{eq:system_brunovsky} is 
    $z(t) = Z(t-t_0) \zeta$,
for $t_0 \leq t \leq T + t_0$, where $Z(t)$ and $V(t)$ are defined in \eqref{eq:Z_matrix_easy} and \eqref{eq:V_matrix_easy}, respectively. 
\end{lemma}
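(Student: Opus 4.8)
The plan is to exhibit $t\mapsto Z(t-t_0)\zeta$ as \emph{a} solution of \eqref{eq:system_brunovsky} with the prescribed input and matching initial condition, and then invoke uniqueness of solutions of linear ODEs to conclude it is \emph{the} solution.

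First I would record that the map is well defined: by hypothesis $\{z^i(0)\}_{i=1}^{n+1}$ is affinely independent, which by the definition in Section~2.1 means $\{z^i(0)-z^1(0)\}_{i=2}^{n+1}$ is linearly independent, i.e. the $n\times n$ matrix $Z(0)$ is invertible; hence $\zeta=Z^{-1}(0)z(t_0)$ makes sense and is a \emph{constant} vector. Next, the key structural observation: each $(z^i,v^i)$ satisfies $\dot z^i = Az^i + Bv^i$, so subtracting the equation for $i=1$ gives $\frac{d}{dt}\big(z^i(t)-z^1(t)\big) = A\big(z^i(t)-z^1(t)\big) + B\big(v^i(t)-v^1(t)\big)$ for every $i$ and all $t\in[0,T]$. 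Reading this column-by-column against the definitions \eqref{eq:Z_matrix_easy}–\eqref{eq:V_matrix_easy} yields the matrix identity $\dot Z(t) = AZ(t) + BV(t)$ on $[0,T]$. This step is where the linearity of \eqref{eq:system_brunovsky} is essential, and it is really the only idea in the proof.

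Then I would set $\widetilde z(t) := Z(t-t_0)\zeta$ for $t\in[t_0,T+t_0]$ and differentiate: since $\zeta$ is constant and \eqref{eq:system_brunovsky} is time-invariant, $\dot{\widetilde z}(t) = \dot Z(t-t_0)\zeta = \big(AZ(t-t_0)+BV(t-t_0)\big)\zeta = A\widetilde z(t) + B\big(V(t-t_0)\zeta\big)$, which is exactly $A\widetilde z(t) + Bv(t)$ for the stated control law $v(t)=V(t-t_0)\zeta$. Moreover $\widetilde z(t_0) = Z(0)\zeta = Z(0)Z^{-1}(0)z(t_0) = z(t_0)$, so $\widetilde z$ agrees with the actual solution at the initial time. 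Finally, since for the fixed input $v$ the system \eqref{eq:system_brunovsky} is a linear (time-varying, once $v$ is substituted) ODE, its solution through $z(t_0)$ is unique on $[t_0,T+t_0]$; therefore $z(t) = \widetilde z(t) = Z(t-t_0)\zeta$, as claimed.

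There is no serious obstacle here; the proof is essentially a verification. The only points requiring a little care are (i) translating ``affinely independent'' into invertibility of $Z(0)$, (ii) keeping the time-shift $t-t_0$ bookkeeping straight when differentiating (harmless because the Brunovsky dynamics are autonomous), and (iii) being explicit that $\zeta$ does not depend on $t$, which is precisely what makes a fixed linear combination of solutions again a solution. I would present these as three short displayed computations rather than belaboring them.
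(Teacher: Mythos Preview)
Your proposal is correct and follows exactly the approach the paper indicates: the paper's proof consists of the single line ``This lemma can be verified by substitution,'' and your argument is precisely that substitution written out in full, with the invertibility of $Z(0)$, the matrix identity $\dot Z = AZ + BV$, and uniqueness made explicit. There is nothing to add or change.
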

\begin{proof}
This lemma can be verified by substitution.
\end{proof}
\begin{remark}
Affine independence of the set $\{ z^i(0)\}_{i = 1}^{n+1}$ is a generic property, i.e., this is true for almost all expert demonstrations. In practice, if this set is not affinely independent, a user can eliminate the affinely dependent demonstrations and request the expert to provide additional demonstrations.
\end{remark}

We note, however, that the control law \eqref{eq:proposed_control_easy} samples the state $z$ with a sampling time $T$ and essentially operates in open loop in between these samples. To allow for closed-loop control, we propose the improved controller that has, for all $t \in [pT, (p+1)T]$, the following form:
\begin{equation}
\begin{aligned}
\label{eq:proposed_control_mod_easy}
    v(t) = \widehat{\kappa}(t,z(t)) &= V(t - pT)\zeta(p, t),\\
    \zeta(p, t) &= Z^{-1}(t-pT) z(t).
\end{aligned}
\end{equation}

In the absence of uncertainties and disturbances, by Lemma \ref{th:solution_linear_combination_easy}, the coefficients $\zeta$ satisfy:
\begin{equation}
\begin{aligned}
\label{eq:coeffs_equal_easy}
    \zeta(p,t) &= Z^{-1}(t-pT) z(t) =  Z^{-1}(0) z(pT),
\end{aligned}
\end{equation}
i.e., the controller \eqref{eq:proposed_control_mod_easy} applies the input equal to that applied by the controller \eqref{eq:proposed_control_easy}.

\subsection{Stability of the learned controller}
Assuming \eqref{eq:coeffs_equal_easy} holds, the system \eqref{eq:system_brunovsky} in closed loop with \eqref{eq:proposed_control_mod_easy} has the following form:
\begin{align}
\label{eq:closed_loop_sys_easy}
    \dot{z} = Az + B V(t - pT) Z^{-1}(0) z(pT),
\end{align}
for all $t \in [pT, (p+1)T]$. Integrating the dynamics, we show that the sequence $\{ z(pT)\}_{p \in \mathbb{N}_0}$ satisfies:
\begin{align}
\label{eq:discrete_time_sys_easy}
    z((p+1)T) &= \Psi(T) z(pT),
\end{align}
where:
\begin{align}
\label{eq:monodromy_matrix_easy}
    \Psi(T) \triangleq e^{AT} + \int_0^T e^{A(T-\tau)} B V(\tau) Z^{-1}(0) \mathrm{d}\tau.
\end{align}
By adopting a term from Floquet's theory, we refer to $\Psi(T)$ in \eqref{eq:monodromy_matrix_easy} as the closed-loop monodromy matrix \cite{Kabamba1987}. 

This section's main result provides sufficient conditions for asymptotic stability of system \eqref{eq:system} in closed loop with \eqref{eq:feedback}-\eqref{eq:proposed_control_mod_easy}. 
\begin{theorem}
\label{th:main_result_easy}
Consider the system \eqref{eq:system} and assume it is feedback linearizable on an open set $U \subseteq \R^n$ containing the origin.
Let $T \in \R^+$ and suppose we are given a finite set of demonstrations $\mathcal{D} = \{ (x^i, u^i)\}_{i=1}^{n+1}$ generated by the system \eqref{eq:system}, in closed loop with a smooth asymptotically stabilizing controller $k:\R^n \rightarrow \R$, and satisfying $x^i(t) \in U$ for all $t \in [0,T]$. Assume that $\{ \Phi(x^i(t)) \}_{i=1}^{n+1}$ is affinely independent for all $t \in [0,T]$. Then, there is a $\tilde{T}\in \R^+$ such that for all $T \geq \tilde{T}$, the origin of system \eqref{eq:system} in closed-loop with controller \eqref{eq:feedback}-\eqref{eq:proposed_control_mod_easy} is uniformly asymptotically stable.
\end{theorem}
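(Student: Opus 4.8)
The plan is to prove the statement in the feedback-linearized coordinates and then transport the conclusion back through the diffeomorphism $\Phi$. First I would observe that, by the feedback-linearizability assumption and the transformation \eqref{eq:feedback_lin_transformation}--\eqref{eq:feedback}, it suffices to show that the origin of the Brunovsky-form system \eqref{eq:system_brunovsky} in closed loop with \eqref{eq:proposed_control_mod_easy} is uniformly asymptotically stable on some neighbourhood of the origin contained in $\Phi(U)$; the remark preceding this subsection guarantees that the expert controller $\kappa$ in $(z,v)$-coordinates is smooth and asymptotically stabilizing. Without loss of generality I take the trivial demonstration to be $(z^1,v^1)\equiv(0,0)$, so that the matrices $Z(t),V(t)$ of \eqref{eq:Z_matrix_easy}--\eqref{eq:V_matrix_easy} have the demonstrations themselves as columns; the affine-independence hypothesis then makes $Z(t)$ invertible for every $t\in[0,T]$, so \eqref{eq:proposed_control_mod_easy} is well defined on $[0,T]$, and its affine hull being all of $\R^n$ lets any small $z(pT)$ be represented.

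Next, using Lemma \ref{th:solution_linear_combination_easy} together with the identity \eqref{eq:coeffs_equal_easy}, the ideal closed loop on each interval $[pT,(p+1)T]$ is $z(t)=Z(t-pT)Z^{-1}(0)z(pT)$, so the sampled sequence obeys the linear recursion $z((p+1)T)=\Psi(T)z(pT)$ with $\Psi(T)=Z(T)Z^{-1}(0)$; applying variation of constants to $\dot Z=AZ+BV$ (each column being a solution of \eqref{eq:system_brunovsky}) shows this coincides with the monodromy matrix \eqref{eq:monodromy_matrix_easy} and with the closed loop \eqref{eq:closed_loop_sys_easy}. The core estimate is then that $\|\Psi(T)\|\to 0$ as the demonstration length $T\to\infty$: since $k$ is asymptotically stabilizing and the demonstrations remain in $U$, there is $\beta\in\mathcal{KL}$ with $\|x^i(t)\|\le\beta(\|x^i(0)\|,t)$, hence $z^i(T)=\Phi(x^i(T))\to 0$, hence $Z(T)\to 0$, while $Z^{-1}(0)$ is a fixed matrix. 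Thus I can choose $\tilde T\in\R^+$ so that $\|\Psi(T)\|\le\lambda<1$ for every $T\ge\tilde T$.

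The final step is to upgrade the geometric decay of the sampled sequence to uniform asymptotic stability of the continuous, period-$T$ closed loop. From $\|\Psi(T)\|\le\lambda$ and the intersample formula of Lemma \ref{th:solution_linear_combination_easy}, I obtain a bound of the form $\|z(t)\|\le C\,\lambda^{\lfloor (t-t_0)/T\rfloor}\|z(t_0)\|$, with $C$ depending only on $\sup_{s\in[0,T]}\|Z(s)\|$, $\sup_{s\in[0,T]}\|Z^{-1}(s)\|$ and $\lambda$, and uniform in $t_0$ because the controller is $T$-periodic; this yields the required $\mathcal{KL}$ estimate. Choosing the initial ball small enough that $C$ times its radius keeps the trajectory inside $\Phi(U)$ makes the argument self-consistent, since the feedback law \eqref{eq:feedback} and the controller \eqref{eq:proposed_control_mod_easy} remain defined there. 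Pulling back through $\Phi^{-1}$, a diffeomorphism fixing the origin, and composing the $\mathcal{KL}$ bound with the local class-$\mathcal{K}$ bounds relating $\|x\|$ and $\|\Phi(x)\|$, gives uniform asymptotic stability of the origin of \eqref{eq:system} in closed loop with \eqref{eq:feedback}--\eqref{eq:proposed_control_mod_easy}. I expect the main obstacle to be precisely this last transition: the essentially algebraic fact that $\|\Psi(T)\|$ shrinks with $T$ is easy, but one must carefully control the domain of validity (ensuring that trajectories starting in a small neighbourhood never leave $\Phi(U)$) and make the $\mathcal{KL}$ estimate genuinely uniform in the initial time across the periodic switching instants.
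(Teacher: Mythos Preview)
Your proposal is correct and follows essentially the same route as the paper: identify $\Psi(T)=Z(T)Z^{-1}(0)$ via Lemma~\ref{th:solution_linear_combination_easy}, use asymptotic stability of the finitely many demonstrations to make $\|\Psi(T)\|<1$ for all large $T$, lift the discrete contraction to a continuous $\mathcal{KL}$ bound via the bounded intersample map, and transport back through $\Phi$. The only cosmetic differences are that the paper reaches the uniform smallness of $\|z^i(T)\|$ through the $\sigma_1(\sigma_2(r)e^{-t})$ decomposition of $\beta$ (unnecessary here since there are only finitely many initial conditions, as you implicitly use) and cites \cite{Kabamba1987} for the sampled-to-continuous step where you write out the explicit geometric bound.
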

\begin{proof}
The asymptotic stability of \eqref{eq:system} and \eqref{eq:system_brunovsky} are equivalent on $U$ and $\Phi(U)$ \cite{Sontag1998}, and, therefore, the set $\mathcal{D}_{(z,v)}$ given by \eqref{eq:transform_z} and \eqref{eq:transform_v} also consists of asymptotically stable solutions, i.e., there exists $\beta \in \mathcal{KL}$ such that for all $i \in \{1, ..., n+1\}$:
\begin{align}
\label{eq:demonstrations_asymptotic_stability_easy}
    \|z^i(t)\| \leq \beta(\|z^i(0) \|, t), \quad \forall t \in \R^+_0.
\end{align}

Consider the closed-loop system \eqref{eq:closed_loop_sys_easy}. By Lemma \ref{th:solution_linear_combination_easy}:
\begin{align*}
    z((p+1)T) = Z(T)Z^{-1}(0)z(pT), \quad \forall T \in \R^+_0.
\end{align*}
Combining this with \eqref{eq:discrete_time_sys_easy} implies that:
\begin{align}
\label{eq:monodromy_expr_easy}
    \Psi(T) = Z(T) Z^{-1}(0).
\end{align}

We claim that, for any constants $a,b,c > 0$, there exists $t \in \R^+$ such that $\beta(r,t) < c$ for all $r \in [a, b]$. This claim will be shown using an argument similar to that of the proof of Lemma 16 in \cite{Geiselhart2014}. Using Lemma 4.3 from \cite{Teel2000}, there exist class $\mathcal{K}_\infty$ functions $\sigma_1, \sigma_2$ such that $\beta(r,t) \leq \sigma_1(\sigma_2(r)e^{-t})$ for all $r, t \in \R^+_0$. 
Let $0<\varepsilon<c$. Define, for all $r \in \R^+$, $t(r)$ to be the solution of $\sigma_1(\sigma_2(r) e^{-t}) = c - \varepsilon$ and obtain:
\begin{align*}
    t(r) = -\log \frac{\sigma_1^{-1}(c-\varepsilon)}{\sigma_2(r)}.
\end{align*}
Since $t(r)$ is a continuous function and $[a,b]$ is compact, the extreme value theorem implies that $t^* = \max_{r \in [a,b]} t(r)$ is well-defined. For all $r \in [a,b]$, it is true that:
\begin{align*}
    \beta(r,t^*) &\leq \sigma_1(\sigma_2(r) e^{-t^*}) \leq c - \varepsilon < c.
\end{align*}

Using the previous claim with \mbox{$a = \min_{i \in \{1, ..., n+1\} } \| z^i(0) \|$}, $b = \max_{i \in \{1, ..., n+1 \} } \| z^i(0) \|$ and $c = {1}/\left({2\sqrt{n} \left\lVert Z^{-1}(0) \right\rVert}\right)$, we conclude the existence of $\tilde{T} \in \R^+$ such that, for all $T \geq \tilde{T}$, the following inequality holds:
\begin{align*}
    \beta(\|z^i(0)\|, T) < \frac{1}{2\sqrt{n} \| Z^{-1}(0) \|},
\end{align*}
for all $i \in \{1, \hdots, n+1\}$. Therefore, by \eqref{eq:demonstrations_asymptotic_stability_easy}, for all $i \in \{1, \hdots, n+1\}$ and for all $T \geq \tilde{T}$, we have:
\begin{align}
\label{eq:bound_on_z_easy}
    \|z^i(T)\| < \frac{1}{2\sqrt{n} \| Z^{-1}(0) \|}.
\end{align}
Using \eqref{eq:monodromy_expr_easy} and \eqref{eq:bound_on_z_easy}, for all $T \geq \tilde{T}$, we have:
\begin{equation}
\begin{aligned}
\label{eq:monodromy_stable_easy}
    \|\Psi(T) \| & \leq  \|Z(T)\| \left\lVert Z^{-1}(0) \right\rVert
                 \leq \|Z(T)\|_F \left\lVert Z^{-1}(0) \right\rVert \\
                & = \left( \sum_{i=2}^{n+1} \|z^i(T) - z^{1}(T)\|^2 \right)^{\frac{1}{2}} \left\lVert Z^{-1}(0) \right\rVert \\
                & < \frac{\sqrt{n}}{\sqrt{n} \left\lVert Z^{-1}(0) \right\rVert} \cdot \left\lVert Z^{-1}(0) \right\rVert < 1.
\end{aligned}
\end{equation}

According to stability conditions for linear discrete-time systems (see Theorem 10.9 in \cite{Antsaklis2006}), the equation \eqref{eq:monodromy_stable_easy} implies that, for all $T > \tilde{T}$, the system \eqref{eq:discrete_time_sys_easy} is uniformly exponentially stable. From \cite{Kabamba1987}, we know that uniform exponential stability of the sampled-data system \eqref{eq:discrete_time_sys_easy} implies uniform exponential stability of the system \eqref{eq:system_brunovsky}-\eqref{eq:proposed_control_mod_easy} because the matrices $\Psi(t)$ are bounded for $t \in [0,T]$. Uniform asymptotic stability of the origin for the system \eqref{eq:system_brunovsky}-\eqref{eq:proposed_control_mod_easy} in the $(z,v)$-coordinates implies uniform asymptotic stability of the origin for the feedback equivalent system \eqref{eq:system}-\eqref{eq:feedback}-\eqref{eq:proposed_control_mod_easy} in $(x,u)$-coordinates \cite{Sontag1998}.
\end{proof}
\begin{remark}
Theorem \ref{th:main_result_easy} shows the existence of \mbox{$\tilde{T} \in \R^+$} such that $\|\Psi(T)\|<1$ for all $T \geq \tilde{T}$. In practice, a user can determine $T \in \R^+$ satisfying this condition by directly computing \mbox{$\|\Psi(t)\| = \|Z(t) Z^{-1}(0)\|$} for various $t \in \R^+_0$.
\end{remark}
\begin{remark}
The fact that we assume feedback linearizability on some open set $U \subseteq \R^n$ presents the user with the opportunity to use either local or global feedback linearization results, depending on what their application allows for. We recommend \cite{Isidori1995} as a good starting point to find conditions for both local (see Theorem 4.2.3 in \cite{Isidori1995}) and global (see Theorem 9.1.1 in \cite{Isidori1995}) feedback linearizability.
\end{remark}
\begin{remark}
In Theorem \ref{th:main_result_easy}, we provide a guarantee the learned controller $\widehat{k}$ stabilizes the system at the origin. This result can also be useful when the objective of the learned controller is to track a trajectory. The key idea is to recast the problem of trajectory tracking into that of stabilizing the error dynamics (see Section 4.5 in \cite{Isidori1995}). We consider this generality of the learned controller to be a strength of this approach.
We will experimentally illustrate this in Section \ref{sec:simulation}.
\end{remark}
\begin{remark}
Although we assume in this work an exact knowledge of the state, in most applications, the state is estimated via an observer. Depending on the design of the observer, the stability results of our methodology may also vary. To give an example, using Lemma III.8 from \cite{Sultangazin2021}, we can show that, with a well-designed sampled-data observer providing state estimates of both the expert demonstrations and the current state, we can still retain asymptotic stability. In general, however, a persistent error between the state estimate and the current state can weaken the guarantee of asymptotic stability guarantee of the closed-loop system to that of practical stability. 
\end{remark}

\section{Learning from more than $n+1$ expert demonstrations}
\label{sec:more_than_n+1_demons}
Here, we extend the previous results to the case where more than $M > n+1$. For every interval of length $T$, we show how to select a subset of $n+1$ demonstrations that results in the best approximation of the expert controller. 

\subsection{Preliminaries}
We begin by reviewing several key concepts from multivariate linear interpolation. Let $\mathcal{X} = \{x_1, \hdots, x_k\}$ be a finite set of points in $\R^n$. The \emph{convex hull} of a set $\mathcal{X}$, denoted $\conv \mathcal{X}$, is the set of all convex combinations of points in $\mathcal{X}$ \cite{Boyd2004}. For any $\mathcal{I} \subset \{1, \hdots, k\}$, we define the subset $\mathcal{X}_{\mathcal{I}} = \{x_i \in \mathcal{X} \mid i \in \mathcal{I} \}$. A Cartesian product of two sets $\mathcal{X} \times \mathcal{Y}$ has a natural left projection map $\pi_1: \mathcal{X} \times \mathcal{Y} \rightarrow \mathcal{X}$ (resp., right projection map $\pi_2: \mathcal{X} \times \mathcal{Y} \rightarrow \mathcal{Y}$) given by $\pi_1(x,y) = x$ (resp., $\pi_2(x,y) = y$). An \emph{$n$-simplex} $S$ is the convex hull of a set $\mathcal{X}' = \{x_1', \hdots, x'_{n+1} \}$ of $n+1$ affinely independent points. A \emph{triangulation} of points in $\mathcal{X}$, denoted $\mathcal{T}(\mathcal{X})$, is a collection of $n$-simplices such that their vertices are points in $\mathcal{X}$, their interiors are disjoint, and their union is $\conv \mathcal{X}$. We denote the $n$-simplex in $\mathcal{T}(\mathcal{X})$ containing $x \in \conv \mathcal{X}$ by $S_\mathcal{T}(x)$ and define a vertex index set associated with $x$ in $\mathcal{T}(\mathcal{X})$, denoted $\mathcal{I}_{\mathcal{T}}(x)$, as to satisfy $S_{\mathcal{T}}(x) = \conv \mathcal{X}_{\mathcal{I}_{\mathcal{T}}(x)}$. 
The \emph{Delaunay triangulation} of $\mathcal{X}$, denoted $\mathcal{DT}(\mathcal{X})$, is a triangulation with the property that the circum-hypersphere of every $n$-simplex in the triangulation contains no point from $\mathcal{X}$ in its interior. It is unique if no $n + 1$ points are on the same hyperplane and no $n + 2$ points are on the same hypersphere \cite{Chang2018}.

Let $\psi: \R^n \rightarrow \R^m$ be an unknown function. Given a finite set of points $\mathcal{X} = \{x_1, \hdots, x_k \} \subset \R^n$ and a set of function values $\mathcal{Y} = \{y_1, \hdots, y_k \} \triangleq \{ \psi(x_1), \hdots, \psi(x_k)\}$, an \emph{interpolant} $\widehat{\psi}^{\mathcal{X}, \mathcal{Y}}: \conv \mathcal{X} \rightarrow \R^m$ is an approximation of $\psi$ that satisfies $\widehat{\psi}(x) = \psi(x)$ for all $x \in \mathcal{X}$. We define an interpolant $\widehat{\psi}_{\mathcal{T}}^{\mathcal{X}, \mathcal{Y}}: \conv \mathcal{X} \rightarrow \R^m$, called a \emph{piecewise-linear interpolant} based on $\mathcal{T}(\mathcal{X})$, as:
\begin{align*}
    \widehat{\psi}_{\mathcal{T}}^{\mathcal{X}, \mathcal{Y}}(x) = \sum_{i \in \mathcal{I}_{\mathcal{T}}(x)} \theta_i y_i,
\end{align*}
where $\theta_i \geq 0$ satisfy:
\begin{align*}
    x = \sum_{i \in \mathcal{I}_{\mathcal{T}}(x)} \theta_i x_i, \quad \sum_{i \in \mathcal{I}_{\mathcal{T}}(x)} \theta_i = 1.
\end{align*}

\subsection{Constructing the learned controller}

Let us describe the construction of the controller \mbox{$v=\widehat{\kappa}(t,z)$} for $M \geq n+1$. Define $\mathcal{Z}(t) = \pi_1\left(\mathcal{D}_{(z,v)}(t)\right)$ and \mbox{$\mathcal{V}(t) = \pi_2\left(\mathcal{D}_{(z,v)}(t)\right)$}. We partition time into intervals of length $T$, indexed by $p \in \mathbb{N}_0$. For each \mbox{$[pT, (p+1)T]$}, we propose using the piecewise-continuous control law \mbox{$v(t) = \widehat{\kappa}(t,z(t))$}, where $\widehat{\kappa}(\tau,\xi)$ is defined as follows:
\begin{enumerate}[label=(\roman*)]
    \item For $\xi \in \conv \mathcal{Z}(\tau-pT)$, the value of $\widehat{\kappa}(\tau,\xi)$ is given by the value at $\xi$ of a piecewise-linear interpolant $\widehat{\psi}_{\mathcal{T}}^{\mathcal{Z}(\tau-pT), \mathcal{V}(\tau-pT)}$. Since a piecewise-linear interpolant is determined by an associated triangulation $\mathcal{T}(\mathcal{Z}(\tau - pT))$ \cite{Chang2018}, this implies that there is a family of possible learned controllers we can construct from $\mathcal{D}_{(z,v)}$. Moreover, the value of the interpolant depends only on the values of $\mathcal{Z}_{\mathcal{I}_{\mathcal{T}}(\xi)}(\tau - pT)$ and $\mathcal{V}_{\mathcal{I}_{\mathcal{T}}(\xi)}(\tau - pT)$, where $\mathcal{I}_{\mathcal{T}}(\xi)$ is a vertex set associated with $\xi$ in $\mathcal{T}(\mathcal{Z}(\tau-pT))$.
    \item For $\xi \notin \conv \mathcal{Z}(\tau - pT)$, let $\xi^*$ be the Euclidean projection of $\xi$ onto $\conv \mathcal{Z}(\tau - pT)$. Define the index set \mbox{$\mathcal{I}_{\mathcal{T}}(\xi) = \mathcal{I}_{\mathcal{T}}(\xi^*)$} and express $\xi$ as an affine combination \mbox{$\xi = \sum_{i \in \mathcal{I}_{\mathcal{T}}(\xi)} \theta_i z^i (0)$}. Then, the value of $\widehat{\kappa}(\tau,\xi)$ is given by $
    \widehat{\kappa}(\tau,\xi) = \sum_{i \in \mathcal{I}_T(\xi)} \theta_i v^i(\tau-pT)$.
\end{enumerate}

In both cases, the controller can be concisely expressed if, given a vertex index set $\mathcal{I} = \{ i_1, \hdots, i_{n+1} \}$ for $\mathcal{Z}(t)$ and $\mathcal{V}(t)$, we construct the following matrices:
\begin{align}
    \label{eq:Z_matrix}
    Z_{\mathcal{I}}(t) \triangleq \left[ \begin{array}{@{}c|c|c@{}}  z^{i_2}(t) - z^{i_1}(t) & \cdots & z^{i_{n+1}}(t) - z^{i_1}(t) \end{array} \right]\\
    \label{eq:V_matrix}
    V_{\mathcal{I}}(t) \triangleq \left[ \begin{array}{@{}c|c|c@{}}  v^{i_2}(t) - v^{i_1}(t) & \cdots & v^{i_{n+1}}(t) - v^{i_1}(t) \end{array} \right],
\end{align}
for $t \in [0, T]$. Then, using \eqref{eq:Z_matrix} and \eqref{eq:V_matrix}, the proposed control law, for all $t \in [pT, (p+1)T]$, is given by:
\begin{equation}
\begin{aligned}
\label{eq:proposed_control_mod}
    v(t) = \widehat{\kappa}_{\mathcal{T}}(t,z(t)) &= V_{\mathcal{I}_{\mathcal{T}}(z(t))}(t - pT)\zeta(p, t)\\
    \zeta(p, t) &=  Z^{-1}_{\mathcal{I}_{\mathcal{T}}(z(t))}(t-pT)  z(t).
\end{aligned}
\end{equation}

Note that, in the absence of uncertainties and disturbances, by Lemma \ref{th:solution_linear_combination_easy}, the coefficients satisfy:
\begin{equation}
\begin{aligned}
\label{eq:coeffs_equal}
    \zeta(p,t) &=  Z^{-1}_{\mathcal{I}_{\mathcal{T}}(z(t))}(t-pT)  z(t)\\ &= Z^{-1}_{\mathcal{I}_{\mathcal{T}}(z(pT))}(0) z(pT).
\end{aligned}
\end{equation}
Therefore, for all \mbox{$t \in [pT, (p+1)T]$}, the controller \eqref{eq:proposed_control_mod} applies the input equal to that applied by the following controller:
\begin{equation}
\begin{aligned}
\label{eq:proposed_control}
    v(t) = \widehat{\kappa}_{\mathcal{T}}(t,z(pT)) &= V_{\mathcal{I}_{\mathcal{T}}(z(pT))}(t - pT)\zeta(p) \\
    \zeta(p) &= Z^{-1}_{\mathcal{I}_{\mathcal{T}}(z(pT))}(0)  z(pT).
\end{aligned}
\end{equation}
Incidentally, this corresponds to the value of the piecewise-linear interpolant $\widehat{\psi}^{\mathcal{T}}_{\mathcal{Z}(0), \mathcal{V}(t-pT)}$ at $z(pT)$. 

\subsection{Stability of the learned controller}

Let us define the collection of index sets $\mathcal{P} = \{ \mathcal{I}_1, \hdots, \mathcal{I}_P\}$, where each $\mathcal{I}_j$ selects vertices of an $n$-simplex in $\mathcal{T}(\mathcal{Z}(0))$ and $P = |\mathcal{T}(\mathcal{Z}(0))|$. Note that $\mathcal{P}$ is a finite set because there are only finitely many $n$-simplices in $\mathcal{T}(\mathcal{Z}(0))$.
Suppose the index set associated with $z(pT)$ in $\mathcal{T}(\mathcal{Z}(0))$ is $\mathcal{I}_{\mathcal{T}}(z(pT)) = \mathcal{I}_{j(p)}$ for some $j(p) \in \{1, \hdots, P \}$. Assuming \eqref{eq:coeffs_equal} holds, the system \eqref{eq:system_brunovsky} in closed loop with \eqref{eq:proposed_control_mod} is given by:
\begin{align}
\label{eq:closed_loop_sys}
    \dot{z} = Az + B V_{\mathcal{I}_{j(p)}}(t - pT) Z^{-1}_{\mathcal{I}_{j(p)}}(0) z(pT),
\end{align}
for all $t \in [pT, (p+1)T]$. Integrating the dynamics shows that the sequence $\{ z(pT)\}_{p \in \mathbb{N}_0}$ satisfies:
\begin{align}
\label{eq:discrete_time_sys}
    z((p+1)T) &= \Psi_{j(p)}(T) z(pT),
\end{align}
where 
\begin{align*}
    \Psi_{j(p)}(T) \triangleq e^{AT} + \int_0^T e^{A(T-\tau)} B V_{\mathcal{I}_{j(p)}}(\tau) Z^{-1}_{\mathcal{I}_{j(p)}}(0) \mathrm{d}\tau.
\end{align*}
Note that now, instead of a single monodromy matrix, we have a set of monodromy matrices $\{\Psi_j(T)\}_{j=1}^P$.

The following result is an extension of Theorem \ref{th:main_result_easy} for \mbox{$M \geq n+1$} demonstrations.
\begin{theorem}
\label{th:main_result}
Consider the system \eqref{eq:system} and assume it is feedback linearizable on an open set $U \subseteq \R^n$ containing the origin. 
Let $T \in \R^+$ and suppose we are given a finite set of demonstrations $\mathcal{D} = \{ (x^i, u^i)\}_{i=1}^{n+1}$ generated by the system \eqref{eq:system}, in closed loop with a smooth asymptotically stabilizing controller $k:\R^n \rightarrow \R$, and satisfying $x^i(t) \in U$ for all $t \in [0,T]$. Assume that $\{ \Phi(x^i(t)) \}_{i=1}^{M}$ is affinely independent for all $t \in [0,T]$. Then, there exists a $\tilde{T}\in \R^+$ such that for all $T \geq \tilde{T}$, the origin of system \eqref{eq:system} in closed-loop with controller \eqref{eq:feedback}-\eqref{eq:proposed_control_mod} is uniformly asymptotically stable.
\end{theorem}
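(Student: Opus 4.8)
The plan is to follow the structure of the proof of Theorem \ref{th:main_result_easy}, the one genuinely new feature being that the sampled-data dynamics \eqref{eq:discrete_time_sys} is now a \emph{switched} linear system driven by the finite family $\{\Psi_j(T)\}_{j=1}^{P}$ rather than by a single monodromy matrix. First I would move to the $(z,v)$-coordinates through \eqref{eq:feedback_lin_transformation}--\eqref{eq:feedback}: exactly as in Theorem \ref{th:main_result_easy}, asymptotic stability is preserved by the diffeomorphism $\Phi$ and the feedback \eqref{eq:feedback}, so every transformed demonstration is an asymptotically stable solution of \eqref{eq:system_brunovsky} and there is a single $\beta \in \mathcal{KL}$ with $\|z^i(t)\| \le \beta(\|z^i(0)\|, t)$ for all $i \in \{1,\dots,M\}$ and all $t \ge 0$.

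Next, for each index set $\mathcal{I}_j \in \mathcal{P}$, Lemma \ref{th:solution_linear_combination_easy} applied to the sub-collection $\{(z^i,v^i)\}_{i \in \mathcal{I}_j}$ --- admissible because $\{z^i(0)\}_{i \in \mathcal{I}_j}$ is affinely independent by hypothesis --- gives $\Psi_j(T) = Z_{\mathcal{I}_j}(T)\,Z^{-1}_{\mathcal{I}_j}(0)$, the analogue of \eqref{eq:monodromy_expr_easy}. I would then invoke the $\mathcal{KL}$-estimate established inside the proof of Theorem \ref{th:main_result_easy} (for every $a,b,c>0$ there is a $t$ with $\beta(r,t)<c$ on $[a,b]$), applied with $b = \max_i \|z^i(0)\|$, with $a = \min_i \|z^i(0)\|$ taken over the nontrivial demonstrations (the trivial one satisfies $z^i \equiv 0$, hence $\|z^i(T)\|=0$ automatically), and with the \emph{uniform} threshold
\[
  c \;=\; \frac{1}{2\sqrt{n}\,\max_{j \in \{1,\dots,P\}} \|Z^{-1}_{\mathcal{I}_j}(0)\|},
\]
which is well defined precisely because $\mathcal{P}$ is a finite set. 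This produces $\tilde T \in \R^+$ such that $\|z^i(T)\| < c$ for every $i$ and every $T \ge \tilde T$; repeating the Frobenius-norm computation of \eqref{eq:monodromy_stable_easy} for each $\mathcal{I}_j$ then yields $\|\Psi_j(T)\| \le \rho < 1$ for all $j$ simultaneously, where $\rho := \max_{j} \|\Psi_j(T)\|$.

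Once a common contraction factor $\rho<1$ is available the switching is harmless: whatever index sequence $\{j(p)\}_{p\in\mathbb{N}_0}$ the triangulation produces, \eqref{eq:discrete_time_sys} gives $\|z((p+1)T)\| \le \rho\,\|z(pT)\|$, so $\{z(pT)\}$ decays geometrically and \eqref{eq:discrete_time_sys} is uniformly exponentially stable independently of the switching signal. To return to continuous time I would use that the finitely many maps $t \mapsto \Psi_j(t)$, $j=1,\dots,P$, are continuous on the compact interval $[0,T]$, hence uniformly bounded by some $C<\infty$; since $z(t) = \Psi_{j(p)}(t-pT)z(pT)$ on $[pT,(p+1)T]$ this supplies the inter-sample bound needed to run the lifting (Kabamba-type, cf. \cite{Kabamba1987}) argument used in Theorem \ref{th:main_result_easy}, giving uniform --- indeed exponential --- asymptotic stability of the origin for \eqref{eq:system_brunovsky}--\eqref{eq:proposed_control_mod}, which transfers back through $\Phi$ and \eqref{eq:feedback} to \eqref{eq:system}. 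The main obstacle is exactly this uniformity over the switching: the same $\tilde T$ must work for all $P$ monodromy matrices at once and the inter-sample growth must be bounded independently of $p$, both of which rest on $\mathcal{T}(\mathcal{Z}(0))$ being finite; the point needing the most care is verifying that $j(p) \in \{1,\dots,P\}$ even when $z(pT) \notin \conv\mathcal{Z}(0)$ and the extrapolation branch~(ii) of the controller is active, so that $\Psi_{j(p)}(T)$ is always one of the $P$ matrices already shown to be contractive.
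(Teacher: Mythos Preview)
Your proposal is correct and follows essentially the same approach as the paper: apply the argument of Theorem~\ref{th:main_result_easy} to each simplex $\mathcal{I}_j$ to force $\|\Psi_j(T)\|<1$ simultaneously, then treat the resulting sampled-data dynamics \eqref{eq:discrete_time_sys} as a switched linear system, and finally lift back to continuous time and to the original coordinates. The only cosmetic differences are that the paper obtains a separate $\tilde T_j$ for each $j$ and sets $\tilde T=\max_j\tilde T_j$ (rather than your single uniform threshold $c$), and it dispatches the switched-system step by citing \cite{Zhai2002} instead of writing out the one-line contraction argument you give.
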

\begin{proof}
The proof of Theorem \ref{th:main_result_easy} implies the existence of $\tilde{T}_j \in \R$ such that $\|\Psi_j(t) \| < 1$
for all $t \geq \tilde{T}_j$. We choose $\tilde{T} = \max_{j \in \{1, \hdots, P \}} T_j$. The system \eqref{eq:system_brunovsky} in closed loop with controller \eqref{eq:proposed_control_mod} can be represented as a switched system \eqref{eq:discrete_time_sys}, where $j(p) \in \{ 1, \hdots, P \}$ is a switching sequence. By Theorem 3 in \cite{Zhai2002}, the fact that $\|\Psi_j(T)\|<1$ for all $T \geq \tilde{T}$ and $j \in \{1, \hdots, P\}$ implies that, for any switching signal $j(p)$, the system \eqref{eq:discrete_time_sys} is uniformly exponentially stable. Since the matrices $\Psi_j(t)$ are bounded for $t \in [0,T]$, the system \eqref{eq:system_brunovsky} in closed loop with controller \eqref{eq:proposed_control_mod} is uniformly exponentially stable. Uniform asymptotic stability of the origin for the system \eqref{eq:system_brunovsky}-\eqref{eq:proposed_control_mod} in the $(z,v)$-coordinates implies uniform asymptotic stability of the origin for the feedback equivalent system \eqref{eq:system}-\eqref{eq:feedback}-\eqref{eq:proposed_control_mod} in $(x,u)$-coordinates \cite{Sontag1998}.
\end{proof}

\subsection{Optimality of the learned controller}

Recall that the piecewise-linear interpolant defining the controller $\widehat{\kappa}_{\mathcal{T}}$ depends on the choice of the triangulation $\mathcal{T}(\mathcal{Z}(t-pT))$. Assuming \eqref{eq:coeffs_equal} holds, this choice reduces to the choice of the triangulation $\mathcal{T}(\mathcal{Z}(0))$, which dictates the index set of demonstrations $\mathcal{I}_{\mathcal{T}}(z(pT))$ used to construct the solution for each interval $[pT, (p+1)T]$. Without loss of generality, in what follows we discuss the solutions on the interval $[0,T]$ only --- a solution on $[pT, (p+1)T]$ can be represented as a solution on $[0,T]$ with the initial condition equal to $z(pT)$. 

Typically, there are several triangulations one can define given a set of sample points $\mathcal{Z}(0)$. We want our choice of triangulation to result in closed-loop trajectories that approximate expert trajectories well for any initial state $z_0 \in \conv \mathcal{Z}(0)$ distinct from $\mathcal{Z}(0)$. More precisely, we want to find a triangulation $\mathcal{T}(\mathcal{Z}(0))$ that best approximates the function $\phi: [0,T] \times \conv \mathcal{Z}(0) \rightarrow \R^n$, which defines solutions of \eqref{eq:system_brunovsky} under the expert controller $\kappa$, by the function $\widehat{\phi}_{\mathcal{T}}: [0,T] \times \conv \mathcal{Z}(0) \rightarrow \R^n$, which defines the solutions of \eqref{eq:system_brunovsky} under the learned controller $\widehat{\kappa}_{\mathcal{T}}$. That is, we want solution to:
\begin{align}
\label{eq:best_worst_problem}
    \min_{\mathcal{T}(\mathcal{Z}(0))} \sup_{\phi \in \mathcal{F}} \max_{t \in [0,T]} \left\Vert \phi(t,z_0) - \widehat{\phi}_{\mathcal{T}}(t,z_0)  \right\Vert,
\end{align}
where $\mathcal{F}$ is the class of functions to which the expert solutions belong. We can view \eqref{eq:best_worst_problem} as a game where we pick $\mathcal{T}(\mathcal{Z}(0))$, and the adversary, upon seeing our choice of $\mathcal{T}(\mathcal{Z}(0))$, picks $\phi$ to maximize the cost.

Let us leverage the properties $\phi(t,z_0)$ has by virtue of describing solutions of \eqref{eq:system_brunovsky} under the expert controller $\kappa$ to determine the class $\mathcal{F}$. We will use the notation $\phi_t: \R^n \rightarrow \R^n$ for $\phi_t(z_0) = \phi(t,z_0)$. By Theorem 4.1 in \cite[Ch. V]{Hartman2002}, since $\kappa$ is a smooth function, the Hessians of the coordinate functions of the solution $\frac{\partial^2 \phi_i}{\partial z_0^2}(t,z_0)$ are continuous with respect to $t$ and $z_0$. By the extreme value theorem, compactness of $\conv \mathcal{Z}(0)$ implies that, for every $i$, there exists $H \in \R^+_0$ such that $\left\Vert\frac{\partial^2 \phi_i}{\partial z_0^2}(t,z_0)\right\Vert \leq H_i$ for all $t \in [0,T]$ and $z_0 \in \conv \mathcal{Z}(0)$. Thus, the norms of the Hessians of the coordinate functions can be bounded by $H = \max \{ H_1, \hdots, H_n \}$. We denote the class of functions whose coordinate functions have the Hessian norm smaller or equal to $H$ by $\mathcal{F}(H)$. For a fixed $t \in [0,T]$, $\phi_t \in \mathcal{F}(H)$ and, therefore, the function $\phi$ belongs to $\mathcal{F}(H)^{[0,T]}$, the set of all functions from $[0,T]$ to $\mathcal{F}(H)$.
\begin{definition}
\label{def:smallest_worst_case_approx}
For any $z_0 \in \conv \mathcal{Z}(0)$ and any learned controller $\kappa_\mathcal{T}$, the worst-case trajectory approximation error on the interval $[0,T]$ is given by:
\begin{equation*}
\begin{aligned}
\sup_{\phi \in \mathcal{F}(H)^{[0,T]}} \max_{t \in [0,T]} \left\Vert \phi(t,z_0) - \widehat{\phi}_{\mathcal{T}}(t,z_0)  \right\Vert,
\end{aligned}
\end{equation*}
where $\phi: [0,T] \times \R^n \rightarrow \R^n$ is the trajectory of the system \eqref{eq:system_brunovsky} with the initial condition $z_0$ under the expert controller $\kappa$, $\widehat{\phi}_{\mathcal{T}}: [0,T] \times \R^n \rightarrow \R^n$ is the trajectory of the system \eqref{eq:system_brunovsky} with the same initial condition $z_0$ under the learned controller $\widehat{\kappa}_\mathcal{T}$, and $\mathcal{F}(H)^{[0,T]}$ is the set of all functions from $[0,T]$ to $\mathcal{F}(H)$. The smallest worst-case trajectory approximation error on the interval $[0,T]$ is given by:
\begin{equation}
\begin{aligned}
\label{eq:best_worst_trajectory}
\min_{\mathcal{T}(\mathcal{Z}(0))}\sup_{\phi \in \mathcal{F}(H)^{[0,T]}} \max_{t \in [0,T]} \left\Vert \phi(t,z_0) - \widehat{\phi}_{\mathcal{T}}(t,z_0)  \right\Vert.
\end{aligned}
\end{equation}
\end{definition}
The following lemma by Omohundro \cite{Omohundro1990} shows that the Delaunay triangulation leads to the best worst-case piecewise-linear interpolation for functions in $\mathcal{F}(H)$. For an efficient implementation of piecewise-linear interpolation based on the Delaunay triangulation, we refer the reader to \cite{Chang2018}.
\begin{lemma}[\hspace{-0.1mm}\cite{Omohundro1990}]
    \label{th:delauney_optimal}
    Let $\psi: \R^n \rightarrow \R^m$ satisfy the bounded Hessian norm property, i.e., \mbox{$\psi \in \mathcal{F}(H)$}, for some $H \in \R^+_0$.
    Given a set of points $\mathcal{X} = \{x_1, \hdots, x_k\} \subset \R^n$ and a set of function values $\mathcal{Y} = \{y_1, \hdots, y_k\} \subset \R^m$, the piecewise-linear interpolant with the smallest maximum approximation error is based on the Delaunay triangulation $\mathcal{DT}(\mathcal{X})$, i.e., for any point $x \in \conv \mathcal{X}$, the following is true:
    \begin{align*}
         \left\Vert \psi(x) - \widehat{\psi}_{\mathcal{DT}}^{\mathcal{X}, \mathcal{Y}}(x) \right\Vert = 
         \min_{\mathcal{T}(\mathcal{X})} \max_{\psi \in \mathcal{F}(H)}
        \left\Vert \psi(x) - \widehat{\psi}_{\mathcal{T}}^{\mathcal{X}, \mathcal{Y}}(x) \right\Vert.
    \end{align*}
\end{lemma}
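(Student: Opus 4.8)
The plan is to reduce the statement to a pointwise comparison of a single geometric quantity and then recognise that quantity, through the classical paraboloid lift, as the height of a piecewise‑linear surface that the Delaunay triangulation minimises. I expect the genuinely elegant part to be the lifting identity together with the convexity of the Delaunay lift, and the most delicate part to be the sharpness of the error bound in the first step.

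First I would localise to a single simplex. Fix $x \in \conv\mathcal{X}$ and a triangulation $\mathcal{T}(\mathcal{X})$, and let $v_1,\dots,v_{n+1}$ be the vertices of $S_{\mathcal{T}}(x)$ with barycentric coordinates $\theta_1,\dots,\theta_{n+1}\ge 0$, $\sum_j\theta_j=1$, so that $\widehat{\psi}_{\mathcal{T}}^{\mathcal{X},\mathcal{Y}}(x)=\sum_j\theta_j\psi(v_j)$. Expanding each coordinate $\psi_i$ by Taylor's theorem with Lagrange remainder and using $\sum_j\theta_j(v_j-x)=0$ to cancel the first‑order terms leaves
\[
  \psi_i(x)-\widehat{\psi}_{\mathcal{T},i}^{\mathcal{X},\mathcal{Y}}(x)
  = -\tfrac12\sum_{j=1}^{n+1}\theta_j\,(v_j-x)^{\top}\nabla^2\psi_i(\xi_j)(v_j-x),
  \qquad \xi_j\in[x,v_j].
\]
The Hessian bound then gives $\bigl|\psi_i(x)-\widehat{\psi}_{\mathcal{T},i}^{\mathcal{X},\mathcal{Y}}(x)\bigr|\le \tfrac{H}{2}\,E_{\mathcal{T}}(x)$ with $E_{\mathcal{T}}(x)\triangleq\sum_j\theta_j\|v_j-x\|^2$, and testing with the quadratic $y\mapsto\tfrac{H}{2}\|y-x\|^2$ (shifted by the affine interpolant of the prescribed vertex values, which changes neither the Hessian nor the interpolation error) shows the bound is sharp up to the per‑coordinate sign. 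Hence $\max_{\psi\in\mathcal{F}(H)}\|\psi(x)-\widehat{\psi}_{\mathcal{T}}^{\mathcal{X},\mathcal{Y}}(x)\|$ is a triangulation‑independent multiple of $E_{\mathcal{T}}(x)$, and the lemma reduces to showing that $\mathcal{DT}(\mathcal{X})$ minimises $E_{\mathcal{T}}(x)$ for every $x$.

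Then I would introduce the lift $\ell(y)=(y,\|y\|^2)\in\R^{n+1}$ and let $L_{\mathcal{T}}:\conv\mathcal{X}\to\R$ be the function that is affine on each simplex of $\mathcal{T}(\mathcal{X})$ and satisfies $L_{\mathcal{T}}(x_i)=\|x_i\|^2$. Expanding $\|v_j-x\|^2=\|v_j\|^2-2\langle v_j,x\rangle+\|x\|^2$ and summing against $\theta_j$, using $\sum_j\theta_j v_j=x$, yields the identity $E_{\mathcal{T}}(x)=L_{\mathcal{T}}(x)-\|x\|^2$, so minimising $E_{\mathcal{T}}(x)$ over triangulations is the same as minimising $L_{\mathcal{T}}(x)$. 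By the classical lifting characterisation of the Delaunay triangulation (the one underlying the implementation in \cite{Chang2018}), $\mathcal{DT}(\mathcal{X})$ is the vertical projection onto $\R^n$ of the lower faces of $\conv\,\ell(\mathcal{X})$; since the map $y\mapsto\|y\|^2$ is strictly convex, every $\ell(x_i)$ is a vertex of that lower hull, so $L_{\mathcal{DT}}$ interpolates the values $\|x_i\|^2$ and, being the lower boundary of a convex polytope, is a convex function on $\conv\mathcal{X}$. For any triangulation $\mathcal{T}$ and any $x$, writing $x=\sum_k\mu_k x_k$ as a convex combination of the vertices of $S_{\mathcal{T}}(x)$ and using convexity of $L_{\mathcal{DT}}$ gives $L_{\mathcal{DT}}(x)\le\sum_k\mu_k L_{\mathcal{DT}}(x_k)=\sum_k\mu_k\|x_k\|^2=L_{\mathcal{T}}(x)$, hence $E_{\mathcal{DT}}(x)\le E_{\mathcal{T}}(x)$ for every $x$ and every $\mathcal{T}$. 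Combined with the first step, this proves the claim.

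The part I expect to be the main obstacle is the sharpness (lower‑bound) half of the first step: the Taylor upper bound is immediate, but exhibiting an admissible adversary $\psi\in\mathcal{F}(H)$ that realises the bound at $x$ in every coordinate \emph{while still interpolating the full data set $\mathcal{Y}$ on all of $\mathcal{X}$} needs care — one either patches the quadratic test function with localised corrections supported away from $x$ and from $S_{\mathcal{T}}(x)$, or argues with a supremum that is approached rather than attained. Once that reduction is secured, the geometric core (the identity $E_{\mathcal{T}}(x)=L_{\mathcal{T}}(x)-\|x\|^2$ and the convexity of the Delaunay lift) is short and self‑contained.
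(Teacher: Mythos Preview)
The paper does not prove this lemma at all: it is stated with the citation \cite{Omohundro1990} in the lemma header and used as a black box in the proof of Proposition~\ref{th:best_worst_case}. So there is no ``paper's own proof'' to compare against.

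Your outline is the standard route to this result and is essentially correct. The Taylor reduction to the barycentric second moment $E_{\mathcal{T}}(x)=\sum_j\theta_j\|v_j-x\|^2$, the lifting identity $E_{\mathcal{T}}(x)=L_{\mathcal{T}}(x)-\|x\|^2$, and the characterisation of $\mathcal{DT}(\mathcal{X})$ as the projection of the lower hull of $\ell(\mathcal{X})$ are exactly the ingredients in Omohundro's argument (and in later expositions such as Rajan's). The one place to be careful, which you already flag, is the sharpness step: as written in the paper the lemma takes a $\max$ over \emph{all} $\psi\in\mathcal{F}(H)$, with no constraint that $\psi$ interpolate a fixed data set $\mathcal{Y}$ at all of $\mathcal{X}$, so your quadratic adversary $y\mapsto\tfrac{H}{2}\|y-x\|^2$ (plus an affine correction) already attains the bound without any patching --- you do not need the localised corrections you worry about. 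If one insists on reading the statement as fixing $\mathcal{Y}$, the equality on the left-hand side is in any case only meant in the min--max sense (the notation in the paper is a little loose), and the reduction to $E_{\mathcal{T}}(x)$ goes through unchanged.
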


The following proposition uses Lemma \ref{th:delauney_optimal} to show that choosing the Delaunay triangulation defines the learned controller that results in closed-loop trajectories that best approximate the corresponding expert trajectories. 

\begin{proposition}
\label{th:best_worst_case}
Consider the system \eqref{eq:system} and assume it is feedback linearizable on an open set $U \subseteq \R^n$ containing the origin. 
Let $T \in \R^+$ and suppose we are given a finite set of demonstrations $\mathcal{D} = \{ (x^i, u^i)\}_{i=1}^{n+1}$ generated by the system \eqref{eq:system}, in closed loop with a smooth asymptotically stabilizing controller $k:\R^n \rightarrow \R$, and satisfying $x^i(t) \in U$ for all $t \in [0,T]$. Assume that $\{ \Phi(x^i(t)) \}_{i=1}^{M}$ is affinely independent for all $t \in [0,T]$. For any $z_0 \in \conv \mathcal{Z}(0)$, the controller $\widehat{\kappa}_{\mathcal{DT}}$ based on the Delaunay triangulation $\mathcal{DT}(\mathcal{Z}(0))$ defined as in \eqref{eq:proposed_control_mod} results in closed-loop trajectories in $z$-coordinates that have the smallest worst-case trajectory approximation error on the interval $[0,T]$ as defined in Definition \ref{def:smallest_worst_case_approx}.
\end{proposition}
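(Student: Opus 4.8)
The plan is to reduce the worst-case \emph{trajectory} approximation error of Definition~\ref{def:smallest_worst_case_approx} to a worst-case \emph{piecewise-linear interpolation} error at the single point $z_0$, and then to apply Lemma~\ref{th:delauney_optimal} uniformly in $t$.

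The first and main step is the identification that underlies the whole ``Optimality'' subsection. Since $z_0\in\conv\mathcal{Z}(0)$, it lies in a simplex of every triangulation $\mathcal{T}(\mathcal{Z}(0))$, so only case~(i) in the construction of $\widehat{\kappa}_{\mathcal{T}}$ is relevant; and Lemma~\ref{th:solution_linear_combination_easy} together with \eqref{eq:coeffs_equal} shows that the closed-loop trajectory of \eqref{eq:system_brunovsky} on $[0,T]$ under $\widehat{\kappa}_{\mathcal{T}}$ with initial condition $z_0$ is $\widehat{\phi}_{\mathcal{T}}(t,z_0)=\widehat{\psi}_{\mathcal{T}}^{\mathcal{Z}(0),\mathcal{Z}(t)}(z_0)$, the value at $z_0$ of the piecewise-linear interpolant based on $\mathcal{T}(\mathcal{Z}(0))$ of the data $\mathcal{Z}(t)=\{z^i(t)\}$, whose weights are the barycentric coordinates of $z_0$ in $S_{\mathcal{T}}(z_0)$ and, by \eqref{eq:coeffs_equal}, do not depend on $t$. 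Because each demonstration $z^i$ is, by uniqueness of solutions of \eqref{eq:system_brunovsky} under $\kappa$, the expert trajectory issued from $z^i(0)$, we have $\phi_t(z^i(0))=z^i(t)$; hence $\widehat{\psi}_{\mathcal{T}}^{\mathcal{Z}(0),\mathcal{Z}(t)}$ is exactly the piecewise-linear interpolant of the map $\phi_t$ over $\mathcal{T}(\mathcal{Z}(0))$, and, as established before Definition~\ref{def:smallest_worst_case_approx}, $\phi_t\in\mathcal{F}(H)$ for every $t$. Consequently the integrand of Definition~\ref{def:smallest_worst_case_approx} equals the piecewise-linear interpolation error $\bigl\|\phi_t(z_0)-\widehat{\psi}_{\mathcal{T}}^{\mathcal{Z}(0),\mathcal{Z}(t)}(z_0)\bigr\|$ of $\phi_t$ at $z_0$.

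Next I would decouple the adversary over time. The adversary chooses a map $t\mapsto\phi_t$ with each $\phi_t\in\mathcal{F}(H)$ and, by the defining property of $\phi$ as the expert's solution map, consistent with the recorded demonstrations, i.e. $\phi_t(z^i(0))=z^i(t)$; no regularity in $t$ is imposed, and at time $t$ the objective depends on $\phi$ only through $\phi_t$ while the interpolant is fixed by the data. Therefore the supremum over $\phi$ and the maximum over $t$ interchange, and the worst-case trajectory error for a given $\mathcal{T}(\mathcal{Z}(0))$ equals $\max_{t\in[0,T]}\ \sup_{\psi\in\mathcal{F}(H),\,\psi|_{\mathcal{Z}(0)}=\mathcal{Z}(t)}\bigl\|\psi(z_0)-\widehat{\psi}_{\mathcal{T}}^{\mathcal{Z}(0),\mathcal{Z}(t)}(z_0)\bigr\|$. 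Now I apply Lemma~\ref{th:delauney_optimal} with $\mathcal{X}=\mathcal{Z}(0)$, $\mathcal{Y}=\mathcal{Z}(t)$ and $x=z_0$: for each fixed $t$, the inner worst-case interpolation error is minimized over all triangulations of $\mathcal{Z}(0)$ by the Delaunay triangulation $\mathcal{DT}(\mathcal{Z}(0))$. The crucial observation is that $\mathcal{DT}(\mathcal{Z}(0))$ depends only on $\mathcal{Z}(0)$, not on $t$, so it minimizes the whole $t$-indexed family of inner objectives \emph{simultaneously}; hence for every $\mathcal{T}(\mathcal{Z}(0))$ the maximum over $t$ of the $\mathcal{T}$-objective dominates that of the $\mathcal{DT}$-objective, and since $\mathcal{DT}(\mathcal{Z}(0))$ is itself a triangulation of $\mathcal{Z}(0)$, the minimum over $\mathcal{T}(\mathcal{Z}(0))$ in \eqref{eq:best_worst_trajectory} is attained at $\mathcal{DT}(\mathcal{Z}(0))$. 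With the first two steps this yields that $\widehat{\kappa}_{\mathcal{DT}}$ achieves the smallest worst-case trajectory approximation error on $[0,T]$.

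The main obstacle is the first step: rigorously identifying the closed-loop trajectory under the learned controller with the piecewise-linear interpolant of the true solution map $\phi_t$, which requires combining Lemma~\ref{th:solution_linear_combination_easy} and \eqref{eq:coeffs_equal} (to write $\widehat{\phi}_{\mathcal{T}}$ as a $t$-fixed affine combination of the demonstrations), uniqueness of solutions (to identify $z^i(t)$ with $\phi_t(z^i(0))$), and the Hessian bound from before Definition~\ref{def:smallest_worst_case_approx} (to place $\phi_t$ in $\mathcal{F}(H)$). A secondary care point is pinning down the adversary class so that Lemma~\ref{th:delauney_optimal} applies verbatim for each fixed $t$; once these are settled, the concluding minimax argument is short, its only content being that the Delaunay triangulation is independent of $t$.
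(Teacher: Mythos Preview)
Your proposal is correct and follows essentially the same route as the paper: identify $\widehat{\phi}_{\mathcal{T}}(t,z_0)$ with the piecewise-linear interpolant of $\phi_t$ on $\mathcal{T}(\mathcal{Z}(0))$ via Lemma~\ref{th:solution_linear_combination_easy} and \eqref{eq:coeffs_equal}, invoke Lemma~\ref{th:delauney_optimal} for each fixed $t$, and then use that $\mathcal{DT}(\mathcal{Z}(0))$ is $t$-independent to pass to the $\max_{t}$; the paper's proof is slightly terser but otherwise identical.
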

\begin{proof}
Recall that by Lemma \ref{th:solution_linear_combination_easy} the trajectory of the system \eqref{eq:system_brunovsky} under the learned controller \eqref{eq:proposed_control_mod} is given by:
\begin{align*}
    \widehat{\phi}_{\mathcal{T}}(t,z_0) = Z_{\mathcal{I}_\mathcal{T}(z_0)} (t) \zeta,
\end{align*}
where $\zeta \in \R^n$ is a vector of affine coefficients that we choose $\zeta$ at the beginning of the interval $[0,T]$ and keep constant.

For a fixed $t \in [0, T]$, we can interpret $\widehat{\phi}_{\mathcal{T}}(t, \cdot)$ as a piecewise-linear interpolant of $\phi_t$ mapping initial conditions to the state reached at time $t$ based on sample points $\mathcal{Z}(0)$ and sample values $\mathcal{Z}(t)$. Therefore, since, for any $t \in [0, T]$, the function $\phi_t \in \mathcal{F}(H)$, by Lemma \ref{th:delauney_optimal}, the function $\widehat{\phi}_{\mathcal{DT}}(t, \cdot)$ is the best worst-case approximation of the function $\phi_t$, i.e.:
\begin{equation}
\begin{aligned}
\label{eq:delauney_is_best}
    \sup_{\phi_t \in \mathcal{F}(H)} &\left\lVert \phi_t(z_0) - \widehat{\phi}_{\mathcal{DT}}(t,z_0) \right\rVert \leq \\
    &\quad \quad \sup_{\phi \in \mathcal{F}(H)} \left\Vert \phi_t(z_0) - \widehat{\phi}_{\mathcal{T}}(t,z_0) \right\Vert,
\end{aligned}
\end{equation}
for any triangulation $\mathcal{T}(Z(0))$ and any $z_0 \in \conv \mathcal{Z}(0)$.
Noting that \eqref{eq:delauney_is_best} holds for all $t \in [0,T]$, we have:
\begin{equation*}
\begin{aligned}
    \max_{t \in [0, T] } &\sup_{\phi \in \mathcal{F}(H)^{[0,T]}} \left\Vert \phi( t,z_0) - \widehat{\phi}_{\mathcal{DT}}( t,z_0)\right\Vert \leq \\
    &\quad \quad \quad \quad \max_{t \in [0, T] } \sup_{\phi \in \mathcal{F}(H)^{[0,T]}} \left\Vert \phi( t ,z_0) - \widehat{\phi}_{\mathcal{T}}( t ,z_0) \right\Vert,
\end{aligned}
\end{equation*}
that can be written as:
\begin{equation*}
\begin{aligned}
    \sup_{\phi \in \mathcal{F}(H)^{[0,T]}} \max_{t \in [0, T] }& \left\Vert \phi( t,z_0) - \widehat{\phi}_{\mathcal{DT}}( t,z_0)\right\Vert \leq \\
    & \quad \sup_{\phi \in \mathcal{F}(H)^{[0,T]}} \max_{t \in [0, T] } \left\Vert \phi( t ,z_0) - \widehat{\phi}_{\mathcal{T}}( t ,z_0) \right\Vert.
\end{aligned}
\end{equation*}
\end{proof}

\begin{remark}
While we justify the construction of the controller for $z(pT) \in \conv \mathcal{Z}(0)$ with optimality in terms of approximation error, we cannot provide a similar justification for $z(pT) \notin \conv \mathcal{Z}(0)$. Therefore, we suggest collecting the expert demonstrations in such a way that the normal region of operation belongs to the convex hull of the demonstrations.
\end{remark}

\begin{remark}
Note that the metric we use to formulate the error in  \eqref{eq:best_worst_trajectory} is expressed in $z$-coordinates instead of the original $x$-coordinates. While we cannot generally have a guarantee that the best worst-case approximation in the $z$-coordinates translates to that in the $x$-coordinates, the metric used in \eqref{eq:best_worst_trajectory} and the Euclidean norm metric in the $x$-coordinates are strongly equivalent on $\conv \mathcal{Z}(t-pT)$ due to the Lipschitz continuity of $\Phi$ and its inverse.
\end{remark}

\begin{remark}
Similarly to Proposition \ref{th:best_worst_case}, one can use Lemma \ref{th:delauney_optimal} to show that the learned controller $\widehat{\kappa}_{\mathcal{DT}}$ based on the Delaunay approximation $\mathcal{DT}(\mathcal{Z}(0))$ is the best worst-case approximation of the expert controller $\kappa$, i.e., for any $z \in \conv \mathcal{Z}(0)$, the Delaunay triangulation is the solution of:
\begin{equation}
\begin{aligned}
\min_{\mathcal{T}(\mathcal{Z}(0))} \max_{\kappa \in \mathcal{F}(H')} \left\Vert \kappa(z) - \widehat{\kappa}_{\mathcal{T}}(z)  \right\Vert,
\end{aligned}
\end{equation}
where $H' \in \R$ is a bound on the Hessian norms of coordinate functions of $\kappa$.
\end{remark}

\section{Learning a stabilizing controller for non-feedback linearizable systems}
\label{sec:non_feedback}

In Sections \ref{sec:n+1_demons} and \ref{sec:more_than_n+1_demons}, we propose a methodology for learning control from expert demonstrations assuming the system is feedback linearizable. Here, we extend our methodology to systems outside of the class of feedback linearizable systems using an embedding technique described in \cite{Astolfi2018}.

\subsection{Embedding technique}

First, we describe the embedding technique from \cite{Astolfi2018}. This technique immerses a nonlinear system of dimension $n$ into an extended system that contains a chain of $n$ integrators via dynamic feedback. Although in \cite{Astolfi2018} only single-input single-output systems were considered, it can be shown that a similar technique applies to multiple-input multiple-output systems. For clarity of exposition, however, we will consider a system \eqref{eq:system} with $m = 1$ and the results extend to multiple-input multiple-output case, mutatis mutandis.

Given constants $w_j \in \R$, $j = 1, \hdots, n-1$ and an output map $h: \R^n \rightarrow \R$, we define $\Phi: \R^{2n-1} \rightarrow \R^{2n-1}$ by:
\begin{equation}
\label{eq:transformation}
    \begin{aligned}
    \begin{bmatrix}
    z \\
    \xi
    \end{bmatrix} =
    \Phi(x,\xi) =
    \begin{bmatrix}
    \Phi_z(x,\xi) \\
    \xi
    \end{bmatrix} =
    \begin{bmatrix}
    h(x) + \xi_1\\
    L_f h(x) + \xi_2 \\
    L^2_f h(x) + \xi_3 \\
    \vdots \\
    L_f^{n-1} h(x) - \sum_{j=1}^{n-1} w_j \xi_j \\
    \xi
    \end{bmatrix},
    \end{aligned}
\end{equation}
where $\xi \in \R^{n-1}$. We also define the auxiliary dynamics:
\begin{equation}
\label{eq:aux_dynamics}
    \begin{aligned}
    \dot{\xi}_1 &= \xi_2 - L_g h(x) u,\\
    \dot{\xi}_2 &= \xi_3 - L_g L_f h(x) u, \\
    &\vdots \\
    \dot{\xi}_{n-1} &= - \sum_{i=1}^{n-1}w_i \xi_i - L_g L_f^{n-2} h(x) u,
    \end{aligned}
\end{equation}
and the feedback law:
\begin{equation}
\label{eq:dynamic_control}
u = \frac{1}{r(x)}\left(s(x, \xi) + v \right)
\end{equation}
where:
\begin{align}
\label{eq:denom_astolfi}
    r(x) = L_gL_f^{n-1}h(x) + \sum_{j=1}^{n-1} w_j L_g L_f^{j-1}h(x),
\end{align} 
and:
\begin{align}
    s(x, \xi) = -L_f^n h(x) + \sum_{i=1}^{n-1} w_i w_{n-1} \xi_i - \sum_{j=1}^{n-2} w_j \xi_{j+1}.
\end{align}
We say that the system \eqref{eq:system} is \emph{feedback linearizable through an embedding} on the open set $U \subseteq \R^n$ if there exist constants $w_j$, $j = 1, \hdots, n-1$ and the output map $h$ such that $\Phi$ is a diffeomorphism from $U$ to $\Phi(U)$ and $r(x) \neq 0$ for all $x \in U$.

If the system \eqref{eq:system} is feedback linearizable through an embedding, we can rewrite the dynamics of \eqref{eq:system} and \eqref{eq:aux_dynamics} in the $(z,\xi)$-coordinates given by \eqref{eq:transformation} resulting in the system that consists of the subsystem describing evolution of $z$ given by:
\begin{equation}
\label{eq:system_after_diff}
\begin{aligned}
    \dot{z}_1 &= z_2,\\
    &\vdots\\
    \dot{z}_{n-1} &= z_n,\\
    \dot{z}_n &= -s(x,\xi) + r(x)u,
\end{aligned}
\end{equation}
and the subsystem describing the evolution of $\xi$ given by:
\begin{equation}
\label{eq:aux_dynamics2}
\begin{split}
    \dot{\xi}_1 &= \left( \xi_2 - L_g h(x)u  \right)_{(x,\xi) = \Phi^{-1}(z,\xi)} \\
    &\vdots \\
    \dot{\xi}_{n-1} &= \left( - \sum_{i=1}^{n-1}w_i \xi_i - L_g L_f^{n-2} h(x)u \right)_{(x,\xi) = \Phi^{-1}(z,\xi)}.
\end{split}
\end{equation}
Furthermore, when the system \eqref{eq:system} is feedback linearizable through an embedding, the feedback law given by \eqref{eq:dynamic_control} is well-defined and transforms the $z$-subsystem \eqref{eq:system_after_diff} into the chain of integrators given by \eqref{eq:system_brunovsky} and the $\xi$-subsystem \eqref{eq:aux_dynamics2} into:

\small
\begin{equation}
\label{eq:xi_subsystem_transformed}
\begin{split}
    \dot{\xi}_1 &= \left( \xi_2 - \frac{L_g h(x)}{r(x)}(s(x,\xi) + v) \right)_{(x,\xi) = \Phi^{-1}(z,\xi)} \\
    &\vdots \\
    \dot{\xi}_{n-1} &= \left( - \sum_{i=1}^{n-1}w_i \xi_i - \frac{L_g L_f^{n-2} h(x)}{r(x)} (s(x,\xi) + v) \right)_{(x,\xi) = \Phi^{-1}(z,\xi)}.
\end{split}
\end{equation}
\normalsize

\subsection{Expert demonstrations}

Similarly to the case of fully feedback linearizable systems, we assume that the system \eqref{eq:system} is feedback linearizable through an embedding on an open set $U \subseteq \R^n$ containing the origin and the demonstrations $x^i(t)$ belong to $U$ for all $t \in [0,T]$. We first transform the demonstrations $\mathcal{D}$ into $(z, \xi, v)$-coordinates. For each demonstration $(x^i, u^i)$, we use \eqref{eq:transformation}, \eqref{eq:aux_dynamics}, and \eqref{eq:dynamic_control} to transform $(x^i, u^i)$ into $(z^i, \xi^i, v^i)$. More specifically:
\begin{itemize}
    \item we choose an arbitrary $\xi_0 \in \R^{n-1}$ to initialize \mbox{$\xi^i(0) = \xi_0$}, and solve the equation in \eqref{eq:aux_dynamics} using demonstrations $(x^i, u^i)$ as the input to determine $\xi^i(t)$ for $t \in [0,T]$;
    \item for all $t \in [0, T]$, using $\xi^i(t)$, we determine $z^i(t)$ from \eqref{eq:transformation} and $v^i(t)$ from \eqref{eq:dynamic_control} as:
\begin{align}
\label{eq:transform_z_astolfi}
    z^i(t) &= \Phi(x^i(t), \xi^i(t)) \\
\label{eq:transform_v_astolfi}
    v^i(t) &= r(x^i(t)) u^i(t) - s(x^i(t), \xi^i(t)).
\end{align}
\end{itemize}
We denote the resulting set of demonstrations by:
\begin{align}
\label{eq:demonstrations_astolfi}
    \mathcal{D}_{(z,\xi,v)} = \{ (z^1, \xi^1, v^1), \hdots, (z^n, \xi^n, v^n) \},
\end{align}
where $z^i: [0, T] \rightarrow \R^n$, $\xi^i: [0, T] \rightarrow \R^{n-1}$, and \mbox{$v^i: [0, T] \rightarrow \R$}.
\color{black}
\subsection{Constructing the learned controller for the extended class of systems}

We now show that, for $M = n + 1$, the controller \mbox{$v = \widehat{\kappa}(t,z)$} from \eqref{eq:proposed_control_mod_easy} stabilizes the system \eqref{eq:system} by stabilizing the chain of integrators \eqref{eq:system_brunovsky} in the transformed coordinates $(z, \xi)$. Please note that we focus on the case $M = n+1$ for ease of exposition, and the proposed extension is also compatible with the case $M \geq n+1$ described in Section \ref{sec:more_than_n+1_demons}.

The following statement provides sufficient conditions for stability of \eqref{eq:system} under the control law \eqref{eq:proposed_control_mod_easy}.

\begin{theorem}
\label{th:main_result_astolfi}
Consider the system \eqref{eq:system} and assume it is feedback linearizable through an embedding on an open set $U \subseteq \R^n$ containing the origin.
Let $T \in \R^+$ and suppose we are given a finite set of demonstrations $\mathcal{D}= \{ (x^i, u^i)\}_{i=1}^{n+1}$ of the system \eqref{eq:system} generated by the system \eqref{eq:system}, in closed loop with a smooth asymptotically stabilizing controller $k: \R^n \rightarrow \R$, and
satisfying $x^i(t) \in U$ for all $t \in [0,T]$. Further, suppose the following two conditions hold:
\begin{enumerate}[label=(\subscript{A}{{\arabic*}})]
\item the matrix:
\begin{align}
    A_\xi = \begin{bmatrix}
    0 & 1 & \cdots & 0\\
    \vdots & \vdots & \ddots & \vdots \\
    0 & 0 & \cdots & 1 \\
    -w_1 & -w_2 & \cdots & -w_{n-1}
    \end{bmatrix}
\end{align}
    is Hurwitz;
\item the $\xi$-subsystem in \eqref{eq:xi_subsystem_transformed} is input-to-state stable (ISS) with respect to $z$ and $v$.
\end{enumerate}
Assume that the set $\{ \Phi_z(x^1(t)), \hdots, \Phi_z(x^n(t))\}$ is affinely independent for all $t \in [0,T]$. Then, there exists a $\tilde{T}\in \R^+$ such that for all $T \geq \tilde{T}$, the origin of system \eqref{eq:system} in closed-loop with the auxiliary dynamics \eqref{eq:aux_dynamics} and controller \eqref{eq:dynamic_control}-\eqref{eq:proposed_control_mod_easy} is uniformly asymptotically stable.
\end{theorem}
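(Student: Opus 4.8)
The plan is to exploit the cascade structure of the transformed closed loop. In the $(z,\xi)$-coordinates of \eqref{eq:transformation}, the feedback \eqref{eq:dynamic_control} turns the $z$-dynamics into the chain of integrators \eqref{eq:system_brunovsky}, which the learned controller \eqref{eq:proposed_control_mod_easy} is designed to stabilize, and this $z$-subsystem then drives the $\xi$-subsystem \eqref{eq:xi_subsystem_transformed}. Since the state of the closed loop is $(x,\xi)$ and $\Phi$ is a diffeomorphism with $\Phi(0,0)=(0,0)$ (using $h(0)=0$ and that the origin is an equilibrium of \eqref{eq:system}), uniform asymptotic stability of $(x,\xi)=(0,0)$ --- which is what the statement asks, as it entails $x(t)\to 0$ --- is equivalent to uniform asymptotic stability of $(z,\xi)=(0,0)$ for the transformed closed loop \cite{Sontag1998}. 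So it suffices to prove the latter.

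\emph{Step 1: the transformed demonstrations $z^i$ are asymptotically stable solutions of \eqref{eq:system_brunovsky}.} Each $x^i$ is a trajectory of the autonomous, locally asymptotically stable system $\dot x = f(x)+g(x)k(x)$ (with $u^i=k(x^i)$ and $k(0)=0$), so it obeys a $\mathcal{KL}$-bound; the auxiliary state obeys $\dot\xi^i = A_\xi\xi^i + d^i(t)$ from \eqref{eq:aux_dynamics}, where $A_\xi$ is Hurwitz by condition (A1) and $d^i$ is a smooth function of $x^i$ vanishing at the origin. Hence the extended demonstration system is a cascade of a (locally) asymptotically stable system into a linear Hurwitz one, so $(x^i,\xi^i)\to(0,0)$ with a $\mathcal{KL}$-bound; composing with the smooth map $\Phi_z$ (locally Lipschitz on the relevant compact set, $\Phi_z(0,0)=0$) yields $\|z^i(t)\|\le\beta(\|z^i(0)\|,t)$ for a suitable $\beta\in\mathcal{KL}$.

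\emph{Step 2: the $z$-subsystem is uniformly (exponentially) asymptotically stable.} After \eqref{eq:dynamic_control} the $z$-dynamics are exactly \eqref{eq:system_brunovsky}, so by Lemma \ref{th:solution_linear_combination_easy} the learned controller \eqref{eq:proposed_control_mod_easy} produces the sampled recursion $z((p+1)T)=\Psi(T)z(pT)$ with $\Psi(T)=Z(T)Z^{-1}(0)$, where $Z^{-1}(t)$ exists and is bounded on $[0,T]$ by the affine-independence hypothesis. The estimate \eqref{eq:monodromy_stable_easy} in the proof of Theorem \ref{th:main_result_easy} uses only the $\mathcal{KL}$-bound on $z^i$ from Step 1 together with this affine independence, and it gives $\|\Psi(T)\|<1$ for all $T\ge\tilde T_z$ and some $\tilde T_z\in\R^+$; combined with boundedness of $\Psi(t)$ on $[0,T]$ and \cite{Kabamba1987,Antsaklis2006}, the $z$-subsystem \eqref{eq:system_brunovsky}-\eqref{eq:proposed_control_mod_easy} is uniformly exponentially stable. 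In particular $z(t)\to 0$ with a $\mathcal{KL}$-bound, and since $v(t)=V(t-pT)Z^{-1}(t-pT)z(t)$ with $V(\cdot)Z^{-1}(\cdot)$ bounded on $[0,T]$, also $\|v(t)\|\le C\|z(t)\|\to 0$.

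\emph{Step 3: close the cascade and transport back.} By condition (A2) the $\xi$-subsystem \eqref{eq:xi_subsystem_transformed} is ISS with respect to $(z,v)$, and both $z$ and $v$ decay to zero with $\mathcal{KL}$-bounds, so the standard fact that an ISS system fed an asymptotically decaying input has a uniformly asymptotically stable origin (and the converging-input--converging-state property) gives $\xi(t)\to 0$; together with Step 2 this makes $(z,\xi)=(0,0)$ uniformly asymptotically stable, and transporting through $\Phi$ finishes the proof with $\tilde T=\tilde T_z$. I expect the main obstacle to be making the two cascade arguments precise --- at the demonstration level (Step 1, where (A1) supplies the $\mathcal{KL}$-bound on $z^i$) and at the closed-loop level (Step 3, where one must combine the time-varying, $p$-switched uniform stability of the $z$-subsystem with the ISS property of the $\xi$-subsystem, which is stated with respect to $v$ as well as $z$) --- together with verifying that all trajectories remain in the open set $U$ on which $\Phi$ is a diffeomorphism and $r(x)\neq 0$, which is the reason the conclusion is a local uniform asymptotic stability statement.
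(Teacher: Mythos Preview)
Your proposal is correct and follows essentially the same route as the paper's proof: use $(A_1)$ and a cascade argument to obtain a $\mathcal{KL}$-bound on the transformed demonstrations $z^i$, invoke the monodromy estimate from Theorem~\ref{th:main_result_easy} to get uniform exponential stability of the $z$-subsystem and a decay bound on $v$, then use the ISS assumption $(A_2)$ in a second cascade argument to conclude uniform asymptotic stability of $(z,\xi)$ and transport back via $\Phi$. The paper makes the two cascade steps precise by citing Lemmas~4.6 and~4.7 of \cite{Khalil2002}, which is exactly what you would do to address the obstacle you flag; the domain-of-validity issue you raise (trajectories remaining in $U$) is not addressed explicitly in the paper either.
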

\begin{proof}
Let us use condition $(A_1)$ to show that the expert solutions $\{(x^i, u^i)\}_{i=1}^{n+1}$ are uniformly asymptotically stable. Since $k(x)$ is asymptotically stabilizing, the origin of the system \mbox{$\dot{x} = f(x) + g(x)k(x)$} is uniformly asymptotically stable. Because the origin is the equilibrium point of \mbox{$\dot{x} = f(x) + g(x)k(x)$}, we have that $k(0) = 0$. The $\xi$-subsystem in \eqref{eq:aux_dynamics} with $u = k(x)$ can be interpreted as a control system with the input $x$. Condition $(A_1)$ together with the fact that $k(0) = 0$ implies that the $\xi$-subsystem in \eqref{eq:aux_dynamics} with $u = k(x)$ is uniformly exponentially stable when $x \equiv 0$. Uniform exponential stability of the unforced $\xi$-subsystem implies that the $\xi$-subsystem in \eqref{eq:aux_dynamics} with $u = k(x)$ is ISS with respect to $x$ (see Lemma 4.6 in \cite{Khalil2002}). By Lemma 4.7 in \cite{Khalil2002}, input-to-state stability of the $\xi$-subsystem with $u = k(x)$ with respect to $x$ as input and uniform asymptotic stability of $\dot{x} = f(x) + g(x) k(x)$ implies that there is a class $\mathcal{KL}$ function $\beta: \R_0^+ \times \R_0^+ \rightarrow \R_0^+$ such that for all $i \in \{ 1, \hdots, n \}$:
\begin{equation}
\label{eq:bound_x_xi}
\begin{aligned}
    \left\lVert \begin{bmatrix} x^i(t) \\ \xi^i(t) \end{bmatrix} \right\rVert \leq \beta \left( \left\lVert \begin{bmatrix} x^i(0) \\ \xi^i(0) \end{bmatrix} \right\rVert,t \right).
\end{aligned}
\end{equation}
Because the system \eqref{eq:system} is feedback linearizable through an embedding, we have that $\Phi$ given by \eqref{eq:transformation} is a diffeomorphism. Therefore, according to \cite{Sontag1998}, the inequality \eqref{eq:bound_x_xi} implies that for all $i \in \{ 1, \hdots, n \}$:
\begin{align}
\label{eq:bounds_on_demonstrations}
    \left\lVert \begin{bmatrix}
    z^i(t) \\ \xi^i(t) \end{bmatrix} \right\rVert &\leq \beta_1 \left( \left\lVert \begin{bmatrix} z^i(0) \\ \xi^i(0) \end{bmatrix} \right\rVert, t \right),
\end{align}
where $\beta_1:\R_0^+ \times \R_0^+ \rightarrow \R_0^+$ is a class $\mathcal{KL}$ function.

We can use \eqref{eq:bounds_on_demonstrations} to show that there is a $\tilde{T} \in \R^+_0$ such that:
\begin{align*}
    \beta_1 \left( \left\lVert \begin{bmatrix} z^i(0) \\ \xi^i(0) \end{bmatrix} \right\rVert, T \right) < \frac{1}{2 \sqrt{n} \|Z^{-1}(0) \|},
\end{align*}
for all $T \geq \tilde{T}$. Using the argument from the proof of Theorem \ref{th:main_result_easy} allows us to conclude uniform exponential stability of the origin of the $z$-subsystem given by \eqref{eq:system_brunovsky}, provided $T > \tilde{T}$.


The uniform exponential stability of the $z$-subsystem implies that there is a function $\beta_z \in \mathcal{KL}$ such that:
\begin{align}
\label{eq:bound_z_exp}
    \|z(t)\| \leq \beta_z(\|z(0)\|, t).
\end{align}
The matrix product $V(t)Z^{-1}(t)$ is continuous with respect to $t$ and defined on $[0,T]$ that is compact. By the extreme value theorem, this product has a bounded norm. This fact, together with the inequality \eqref{eq:bound_z_exp}, implies that the control input $v  \widehat{\kappa}(t,z)$ given by \eqref{eq:proposed_control_mod_easy} satisfies:
\begin{equation}
\begin{aligned}
\label{eq:bound_on_v}
    \|v(t) \| \leq \beta_v(\|z(0)\|, pT),
\end{aligned}
\end{equation}
where $\beta_v$ is also a class $\mathcal{KL}$ function.

By condition $(A_2)$, the $\xi$-subsystem in \eqref{eq:xi_subsystem_transformed} is ISS with respect to $z$ and $v$. 
Lemma 4.7 in \cite{Khalil2002} shows that the ISS property, along with the bounds \eqref{eq:bound_z_exp} and \eqref{eq:bound_on_v}, allows us to conclude that the origin of the system \eqref{eq:system_brunovsky}-\eqref{eq:xi_subsystem_transformed} in closed-loop with the controller \eqref{eq:proposed_control_mod_easy} is uniformly asymptotically stable. Uniform asymptotic stability of the origin of the system \eqref{eq:system_brunovsky}-\eqref{eq:xi_subsystem_transformed}-\eqref{eq:proposed_control_mod_easy} in the $(z,\xi)$-coordinates implies uniform asymptotic stability of the origin of the feedback equivalent system \eqref{eq:system}-\eqref{eq:aux_dynamics}-\eqref{eq:dynamic_control}-\eqref{eq:proposed_control_mod_easy} \cite{Sontag1998}.
\end{proof}
\begin{remark}
Similarly to Theorems \ref{th:main_result_easy} and \ref{th:main_result}, in Theorem \ref{th:main_result_astolfi}, we assume feedback linearizability through an embedding on some open set $U \subseteq \R^n$ without explicitly specifying under what conditions this occurs. This is done to give the user an opportunity to use either local or global results, depending on what their application allows for. To show local feedback linearizability through an embedding, we suggest using the conditions from Proposition 4 in \cite{Astolfi2018}, namely that there exist constants $w_j$, $j = 1, \hdots, n-1$ and an output map $h$ such that:
\begin{enumerate}[label=(\subscript{B}{{\arabic*}})]
    \item the matrix 
    \begin{align*}
        \mathcal{O}(x) =
        \begin{bmatrix}
        dh(x)^T & dL_fh(x)^T & \ldots & dL_f^{n-1}h(x)^T
        \end{bmatrix}^T,
    \end{align*}
    has rank $n$ at the origin, implying that $\Phi$ given by \eqref{eq:transformation} is a diffeomorphism from some neighborhood $U_1$ of the origin to $\Phi(U_1)$;
    \item $r(0) \neq 0$, which implies that $r(x) \neq 0$ for some neighborhood $U_2$ of the origin.
\end{enumerate}
These conditions imply that the system \eqref{eq:system} is feedback linearizable through an embedding on an open set $U = U_1 \cap U_2$.
Please note that the condition $(B_1)$ is also the sufficient condition for local observability at the origin. The condition $(B_2)$ is violated if and only if $L_g L_f^{j-1}h(0) = 0$ for all $j = 1, \ldots, n-1$, which is equivalent to \mbox{$\mathcal{O}(0) \cdot g (0) = 0$}. Given that $\mathcal{O}(0)$ is full-rank, this condition is, in turn, a consequence of $(B_1)$, provided $g(0) \neq 0$. 
\end{remark}
\begin{remark}
Note that the class of feedback linearizable systems through an embedding strictly contains feedback linearizable systems. In Section \ref{sec:ball_and_beam}, we provide an example of a system belonging to this class which is not feedback linearizable.
\end{remark}
\begin{remark}
In general, verifying condition $(A_2)$ can be a challenging task. Therefore, the authors of \cite{Astolfi2018} suggest substituting the ISS condition $(A_2)$ with the more verifiable condition that the matrix $A_\xi + A_w$ is Hurwitz, where:
\begin{align}
    A_w = \nabla_\xi 
    \begin{bmatrix}
        L_g h(x) \frac{s(x,\xi)}{r(x)}\bigg\rvert_{(x,\xi) = \Phi^{-1}(z,\xi)} \\
        L_g L_f h(x) \frac{s(x,\xi)}{r(x)} \bigg\rvert_{(x,\xi) = \Phi^{-1}(z,\xi)} \\
        \vdots \\
        L_g L^{n-2}_f h(x) \frac{s(x,\xi)}{r(x)} \bigg\rvert_{(x,\xi) = \Phi^{-1}(z,\xi)}
    \end{bmatrix}_{(z,\xi) = (0,0)}.
\end{align}
This is because the matrix $A_\xi + A_w$ is a linear approximation of the unforced $\xi$-subsystem in \eqref{eq:xi_subsystem_transformed} around the origin and its stability implies local ISS of the system \eqref{eq:xi_subsystem_transformed}.
\end{remark}

\section{Experiments and simulations}
\subsection{Quadrotor control experiment}
\label{sec:simulation}


\begin{figure}
    \centering
    \includegraphics[width=\linewidth]{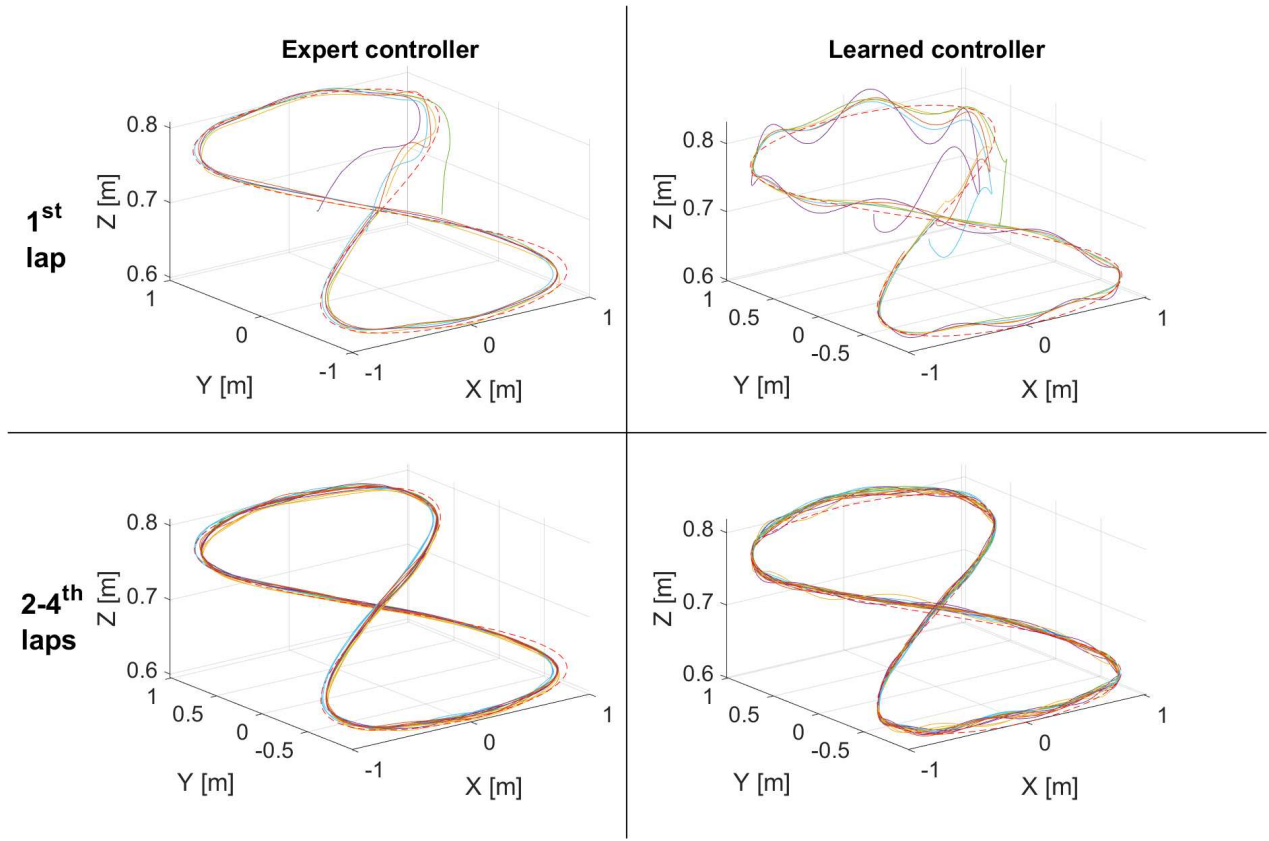}
    \caption{Trajectory tracking of the nonlinear controller from \cite{Faessler2015} (left column) and the learned controller from \eqref{eq:proposed_control_mod_easy} (right column) under five different initial conditions. Each experiment is plotted with a different color. The first lap trajectories (top row) are plotted separately from those in the subsequent laps (bottom row).}
    \label{fig:trajectory}
\end{figure}

\begin{figure*}
    \centering
    \vspace{4pt}
    \includegraphics[width=\linewidth]{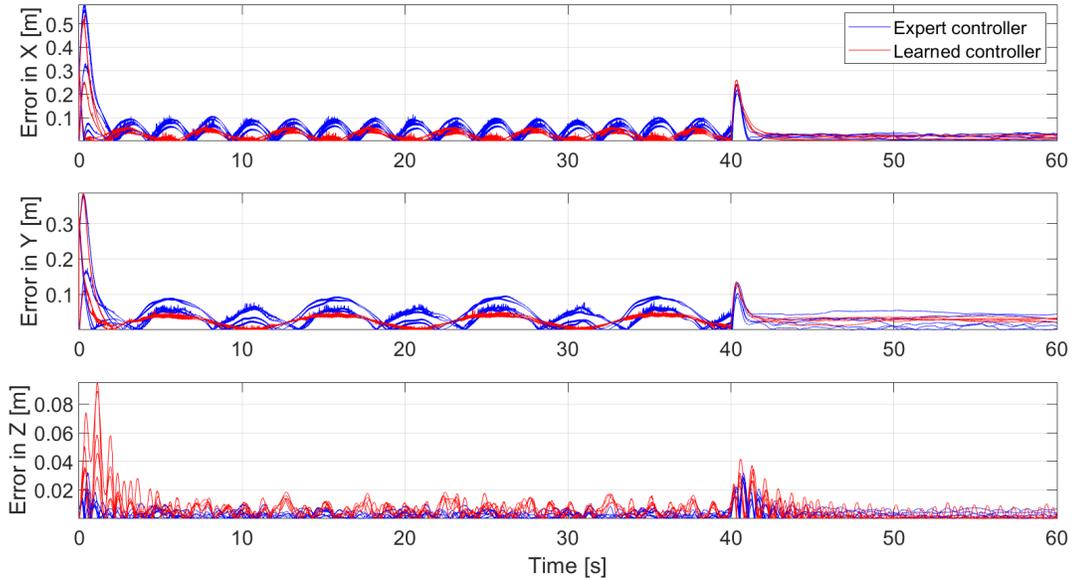}
    \caption{Comparison of tracking errors in $X$, $Y$ and $Z$ coordinates of learned controller from \eqref{eq:proposed_control_mod_easy} (red) and nonlinear controller from \cite{Faessler2015} (blue) for all five experiments.}
    \label{fig:tracking}
\end{figure*}

We illustrate the performance of our methodology using the example of quadrotor dynamics:
\begin{align}
    \label{eq:drone_1}
    \ddot{p} &= \frac{1}{m}\left( \tau R e_3  - [\omega]_{\times}J \omega \right),\\
    \label{eq:drone_2}
    \dot{R} &= R[\omega]_{\times},\\
    \label{eq:drone_3}
    \dot{\omega} &= J^{-1}(\eta - [\omega]_{\times}J\omega),
\end{align}
where: $p \in \R^3$, $R \in SO(3)$, $\omega \in \R^3$ are the position, orientation, and angular velocity of the quadrotor, respectively; $\tau \in \R$ and $\eta \in \R^3$ are thrust and torque inputs, respectively; $m \in \R$, and $J \in \R^{3 \times 3}$ are the mass and the inertia matrix; $[\; \cdot \;]_\times$ denotes the matrix form of the vector cross product, and $e_3 = \begin{bmatrix} 0 & 0 & 1 \end{bmatrix}^T$ is a unit vector. 

We split the dynamics \eqref{eq:drone_1}-\eqref{eq:drone_3} into two subsystems: one described by \eqref{eq:drone_1}-\eqref{eq:drone_2} with the state $x = (p,\dot{p},R,\tau)$ and the virtual inputs $u = (\dot{\tau}, \omega)$, and the other described by \eqref{eq:drone_3} with the state $x' = \omega$ and the virtual inputs $u' = \eta$. Typically, quadrotors have high-frequency internal controllers that track the desired angular velocity based on state feedback and, therefore, it is reasonable to assume that we can directly control the angular velocity \cite{Hehn2011}.

It is known that the dynamics \eqref{eq:drone_1}-\eqref{eq:drone_3} are differentially flat with respect to position and yaw angle \cite{Mellinger2011}. In what follows, we focus on controlling the position $p$, whereas the yaw angle is controlled to remain constant. Differential flatness allows us to transform the dynamics \eqref{eq:drone_1}-\eqref{eq:drone_2} into linear dynamics $\dot{z}_1 = z_2,\; \dot{z}_2 = z_3, \; \dot{z}_3 = v$
via a coordinate transformation:
\begin{align*}
    z = \begin{bmatrix}z_1 & z_2 & z_3 \end{bmatrix}^T \triangleq \begin{bmatrix}p & \dot{p} & \ddot{p} \end{bmatrix}^T,
\end{align*} 
and the feedback law:
\begin{align*}
    v = \frac{1}{m}(\dot{\tau}Re_3 - \tau R \omega_1 e_2 + \tau R \omega_2 e_1) \triangleq b(z)u. 
\end{align*}

We apply the controller design\footnote{The only difference of the expert controller used in this work and that used in \cite{Faessler2015} is that here the low-level controller is a linear PD controller.} from \cite{Faessler2015} to the dynamics \eqref{eq:drone_1}-\eqref{eq:drone_3} in simulation\footnote{We collected expert demonstrations in simulation to ensure there is no estimation error affecting the controller construction. In future work, we aim to construct the controller from expert solutions given by an observer.} and use the resulting solutions as the expert demonstrations. The controller parameters are chosen as follows: $K_P = \diag (7.0, 7.0, 16.5)$, $K_I = \diag (0, 0, 15.5)$, $K_D = \diag (5.0, 5.0, 3.4)$, $K_{rp} = 6.0$, $K_y = 2.0$. The expert is commanded to stabilize the quadrotor at the origin, starting from various positions, velocities and accelerations. Given the dimension of the state $z$ equals $9$, we record $10$ expert solutions $\{(z^i, v^i)\}_{i = 1}^{10}$ from simulations\footnote{The initial conditions used are the unit vectors of $\R^9$.}, including the pair corresponding to the trivial solution $(z^1, v^1) \equiv (0,0)$. Please note that the pairs $(z^i, v^i)$ in this context are merely evolutions of position, velocity, acceleration, and jerk. The recorded data is studied to ensure that the sufficient conditions of Theorem \ref{th:main_result_easy} are satisfied, i.e., the matrix $Z(t)$ in \eqref{eq:Z_matrix_easy} is always invertible and $\|Z(T)Z^{-1}(0)\| < 1$, and a fragment of length $T=2$ s is used to construct a stabilizing controller \eqref{eq:proposed_control_mod_easy}.

Next, we compare the learned controller \eqref{eq:proposed_control_mod_easy} and the expert controller from \cite{Faessler2015} by using them to control a BitCraze CrazyFlie 2.0 quadrotor. In these experiments, the control inputs $(\tau, \omega)$ are supplied by a computer via a USB radio at the average rate of $300$ Hz. The internal PD controller of the CrazyFlie tracks $(\tau, \omega)$ by controlling angular speeds of individual rotors. For state estimation, we use a Kalman filter that gets the position and attitude measurements from an OptiTrack motion capture system.

The experimental benchmark\footnote{Code used in the experiments can be found at \mbox{\url{https://github.com/cyphylab/cyphy_testbed/tree/LFD}}.} we choose to compare the controllers is to track the reference depicted on Figure \ref{fig:trajectory}, which consists of two parts: a figure of eight given by:
$$p_{R}(t) = \left(\sin{(4 \pi f t)}, \sin{(2 \pi f t)}, 0.1 \sin{(2 \pi f t)} + 0.7 \right),$$
where $f = 0.1$ Hz, from $t=0$ s to $t=40$ s; and a setpoint at the origin after $t \geq 40$ s. Note the reference trajectory is quite different from the collected expert demonstrations. We use the learned controller $\widehat{\kappa}$ from \eqref{eq:proposed_control_mod_easy} to control the tracking error with \mbox{$v(t) = \widehat{\kappa}(t, z(t)-z_R(t))$}, where $z_R = (p_R,\dot{p}_R, \ddot{p}_R)$, together with the feedback law:
\begin{equation*}
    \begin{aligned}
    u(t) & = (\dddot{p}_{R}(t) + v(t))/b(z(t)).
    \end{aligned}
\end{equation*}
For both the expert and the learned controllers, we perform five experiments --- each from a different initial position\footnote{The initial positions used are $(0, 0, 0.7)$, $(0.3, 0.3, 0.7)$, $(0.3, -0.3, 0.7)$, $(-0.3, 0.3, 0.7)$, $(-0.3, -0.3, 0.7)$.}. 

In Figure \ref{fig:trajectory}, we depict the quadrotor trajectories for both the nonlinear controller in \cite{Faessler2015} and the learned controller \eqref{eq:proposed_control_mod_easy} tracking the aforementioned trajectory. We plot the position trajectories in the first lap separately from those in the subsequent laps to decouple the transient behaviour of a controller from the steady-state behaviour. In Figure \ref{fig:tracking} we compare the tracking errors of the learned controller with those of the nonlinear controller from \cite{Faessler2015} for all five experiments. The initial conditions used during the experiments are purposefully chosen to be far from the initial conditions used during the simulation. The learned controller appears to track the trajectory well --- the error is of the order of centimeters. It can be seen qualitatively, however, from Figure \ref{fig:trajectory} that, in comparison to the expert controller, the learned controller takes a longer time to settle --- this is especially noticeable in the experiments where the initial position of the quadrotor does not match that of the reference. From Figure \ref{fig:tracking}, we observe that the errors of the learned controller and the expert controller are comparable, with the errors of the learned controller being slightly smaller in $X$ and $Y$ coordinates, whereas being slightly larger in $Z$ coordinates. For $t \geq 40$ s, the error in position does not tend to zero for neither of the controllers, which appears to contradict the theoretical results. We attribute this to the several milliseconds of delay with which the control input is sent to the quadrotor\footnote{Even in simulation, an introduction of such a delay into the control loop has resulted in the trajectory stabilizing at a non-zero steady-state error.}.


\subsection{Ball and beam control simulation}
\label{sec:ball_and_beam}
Consider the ball and beam model described by \cite{Hauser1992}:
\begin{equation}
\begin{aligned}
\label{eq:ball_beam}
    \ddot{r} &= \bar{b} (r \omega^2 - \bar{g} \sin(\phi)) \\
    \dot{\phi} &= \omega \\
    \dot{\omega} &= u,
\end{aligned}
\end{equation}
with $r \in \R$ and $v \in \R$ denoting the position and the velocity of the ball on the beam, respectively, while $\phi \in \R$ and $\omega \in \R$ denote the angle and the angular velocity of the beam with respect to the horizontal line, respectively. The constant $\bar{g}$ is the gravity constant, and the constant $\bar{b} = m/(J_b/R^2 + m)$, where $m$ is mass, $J_b$ is the moment of inertia, and $R$ is the radius of the ball. The state of the system \eqref{eq:ball_beam} is given by \mbox{$x = (r, \dot{r}, \phi, \omega)$}. The values of the parameters in this simulation are chosen to be $\bar{b} = 0.7143$ and $\bar{g} = 9.81$.

We choose the stabilizing controller\footnote{This controller is interesting to study because it is nonlinear, contains nested saturations, and utilizes backstepping.} proposed in \cite{Barbu1997} as the expert controller for the system \eqref{eq:ball_beam}. Since dimension of the state is $4$, we record simulations of $5$ expert solutions $\{(x^i, u^i)\}_{i=1}^5$ starting from various initial conditions\footnote{The initial conditions used are $(1,0,0,0)$, $(0,1,0,0)$, $(0,0,\pi/8, 0)$, $(0,0,0,10)$.}, including the trivial solution $(x^1, u^1) \equiv (0,0)$.

It can be shown that system \eqref{eq:ball_beam} is not feedback linearizable \cite{Hauser1992} and, therefore, techniques described in Sections \ref{sec:n+1_demons} and \ref{sec:more_than_n+1_demons} cannot be used. Instead, we use the technique described in Section \ref{sec:non_feedback} to approximate the expert controller. To immerse the system \eqref{eq:ball_beam}, we use the map $\Phi$ given by \eqref{eq:transformation} with:
\begin{align}
\label{eq:transformation_ball}
    \Phi_z(x,\xi) = 
    \begin{bmatrix}
    x_1 + \xi_1 \\
    x_2 + \xi_2 \\
    -\bar{b} \bar{g} \sin x_3 + \bar{b} x_1 x_4^2 + \xi_3 \\
    \bar{b} x_2 x_4^2 - \bar{g} x_4 \cos x_3 - \sum_{i = 1}^3 w_i \xi_i
    \end{bmatrix},
\end{align}
and the dynamic control law:
\begin{equation}
\label{eq:control_and_ext_ball}
\begin{aligned}
\begin{cases}
    u = \frac{1}{r(x)} \left( s(x,\xi) + v \right) \\
    \dot{\xi}_i = \xi_{i+1}, \quad i = 1,2 \\
    \dot{\xi}_3 = -w_1 \xi_1 - w_2 \xi_2 -w_3 \xi_3 - 2 x_1 x_4 u,
\end{cases}
\end{aligned}
\end{equation}
where:
\begin{align*}
    r(x) &= 2 \bar{b} x_2 x_4 - \bar{b} \bar{g} \cos x_3 + 2 w_3 b x_1 x_4 \\
    s(x, \xi) &= -\bar{b}^2 x_4^2 \left( -\bar{g} \sin x_3 + x_1 x_4^2 \right) - \bar{b} \bar{g} x_4^2 \sin x_3 + w_1 \xi_2\\
    &\quad + w_2 \xi_3 - w_1 w_3 \xi_1 - w_2 w_3 \xi_2 - w_3^2 \xi_3.
\end{align*}
Please note that this map and dynamic control law are only well-defined on an open set around the origin.

We choose the parameters $w = (1,3,3)$. First, we solve the differential equation in \eqref{eq:control_and_ext_ball} for the initial state $\xi(0) = 0$ using each of the previously collected expert solutions $\{(x^i, u^i) \}_{i=1}^5$ as inputs. Next, we use the transformations \eqref{eq:transform_z_astolfi} and \eqref{eq:transform_v_astolfi} to transform the expert solutions into the form $\{ (z^i, v^i) \}_{i=1}^5$. Then, these solutions are inspected to ensure that the conditions of Theorem \ref{th:main_result_astolfi} are satisfied, i.e., the matrix $Z(t)$ is always invertible and $\|Z(T) Z^{-1}(0) \| < 1$, and a fragment of length $T = 8 \text{s}$ is used to construct a stabilizing controller \eqref{eq:proposed_control_mod_easy}.

\begin{figure}
    \centering
    \includegraphics[width=\linewidth]{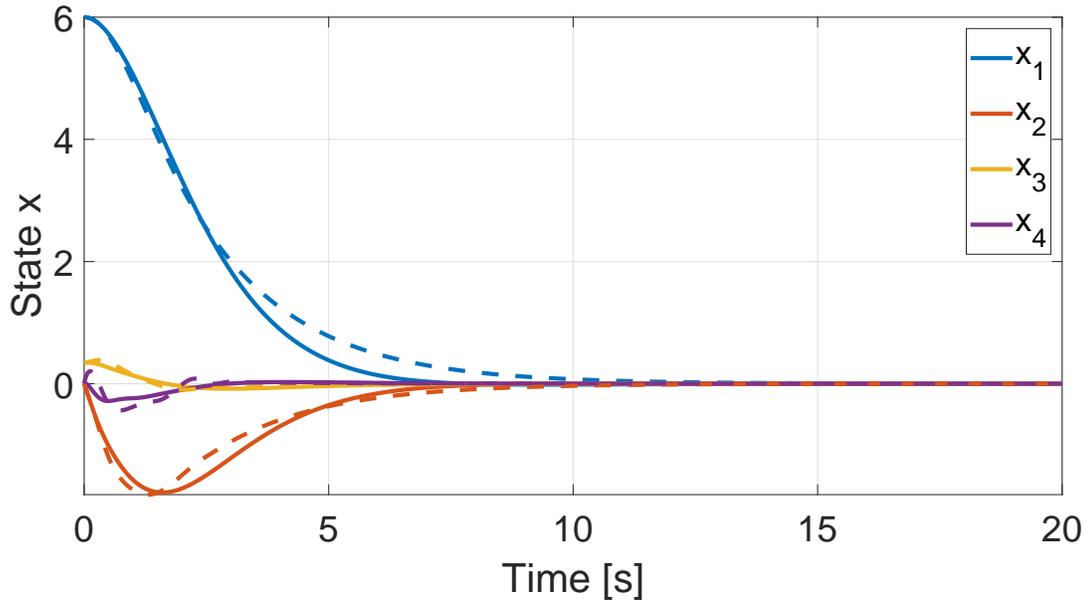}
    \caption{Comparison of stabilization between the learned controller from \eqref{eq:proposed_control_mod_easy} (solid lines) and nonlinear controller from \cite{Barbu1997} (dashed lines).}
    \label{fig:stabilization}
\end{figure}
\begin{figure}
    \centering
    \includegraphics[width=\linewidth]{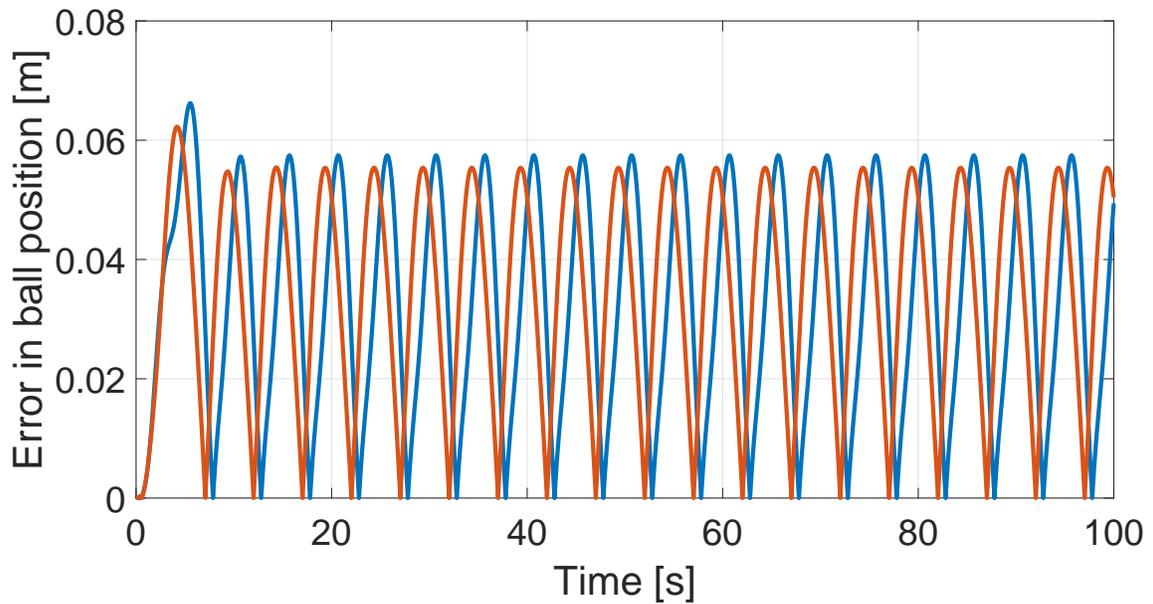}
    \caption{Comparison of tracking errors between the learned controller from \eqref{eq:proposed_control_mod_easy} (blue) and nonlinear controller from \cite{Barbu1997} (red).}
    \label{fig:trajectory_tracking_ball}
\end{figure}

We compare the learned controller \eqref{eq:proposed_control_mod_easy} and the expert controller from \cite{Barbu1997} based on their performance on two tasks: stabilization and output tracking. We define the position of the ball to be the output, i.e., $y = x_1$. When using the expert controller from \cite{Barbu1997} for output tracking, we use the output regulation method described in \cite{Isidori1990}. On Figure \ref{fig:stabilization}, we compare how both controllers stabilize the system to the origin from the initial state $x(0) = (6, 0, 0.345, 0)$. The initial conditions used when deploying the learned controller are purposefully chosen to be far from the initial conditions used during the expert's deployment. The closed-loop solutions of the learned controller and the expert controller appear to be very similar, although the learned controller stabilizes to the origin slightly slower than the expert. On Figure \ref{fig:trajectory_tracking_ball}, we compare how both controllers follow a reference trajectory given by $y_R(t) = 6 \cdot \cos \frac{2 \pi}{10} t$ from the initial state $x(0) = (6, 0, 0.345, 0)$ by plotting their tracking errors. We again observe that the performance of the learned controller and that of the expert controller are similar, with the learned controller having a slightly larger error.

\bibliography{references}

\begin{thebibliography}{10}
\providecommand{\url}[1]{#1}
\csname url@samestyle\endcsname
\providecommand{\newblock}{\relax}
\providecommand{\bibinfo}[2]{#2}
\providecommand{\BIBentrySTDinterwordspacing}{\spaceskip=0pt\relax}
\providecommand{\BIBentryALTinterwordstretchfactor}{4}
\providecommand{\BIBentryALTinterwordspacing}{\spaceskip=\fontdimen2\font plus
\BIBentryALTinterwordstretchfactor\fontdimen3\font minus
  \fontdimen4\font\relax}
\providecommand{\BIBforeignlanguage}[2]{{%
\expandafter\ifx\csname l@#1\endcsname\relax
\typeout{** WARNING: IEEEtran.bst: No hyphenation pattern has been}%
\typeout{** loaded for the language `#1'. Using the pattern for}%
\typeout{** the default language instead.}%
\else
\language=\csname l@#1\endcsname
\fi
#2}}
\providecommand{\BIBdecl}{\relax}
\BIBdecl

\bibitem{Argall2009}
B.~D. Argall, S.~Chernova, M.~Veloso, and B.~Browning, ``{A Survey of Robot
  Learning from Demonstration},'' \emph{Robot. Auton. Syst.}, vol.~57, no.~5,
  p. 469–483, May 2009.

\bibitem{Billard2016}
A.~G. Billard, S.~Calinon, and R.~Dillmann, \emph{Learning from Humans}.\hskip
  1em plus 0.5em minus 0.4em\relax Cham: Springer International Publishing,
  2016, pp. 1995--2014.

\bibitem{Ravichandar2020}
H.~Ravichandar, A.~S. Polydoros, S.~Chernova, and A.~Billard, ``{Recent
  Advances in Robot Learning from Demonstration},'' \emph{Annual Review of
  Control, Robotics, and Autonomous Systems}, vol.~3, no.~1, pp. 297--330,
  2020.

\bibitem{Vogt2017}
D.~{Vogt}, S.~{Stepputtis}, S.~{Grehl}, B.~{Jung}, and H.~{Ben Amor}, ``A
  system for learning continuous human-robot interactions from human-human
  demonstrations,'' in \emph{2017 IEEE International Conference on Robotics and
  Automation (ICRA)}, 2017, pp. 2882--2889.

\bibitem{vandenBerg2010}
J.~{van den Berg}, S.~{Miller}, D.~{Duckworth}, H.~{Hu}, A.~{Wan}, X.~{Fu},
  K.~{Goldberg}, and P.~{Abbeel}, ``Superhuman performance of surgical tasks by
  robots using iterative learning from human-guided demonstrations,'' in
  \emph{2010 IEEE International Conference on Robotics and Automation}, 2010,
  pp. 2074--2081.

\bibitem{Lauretti2017}
C.~{Lauretti}, F.~{Cordella}, E.~{Guglielmelli}, and L.~{Zollo}, ``{Learning by
  Demonstration for Planning Activities of Daily Living in Rehabilitation and
  Assistive Robotics},'' \emph{IEEE Robotics and Automation Letters}, vol.~2,
  no.~3, pp. 1375--1382, 2017.

\bibitem{Maeda20}
G.~J. {Maeda}, G.~{Neumann}, M.~{Ewerton}, R.~{Lioutikov}, O.~{Kroemer}, and
  J.~{Peters}, ``Probabilistic movement primitives for coordination of multiple
  human–robot collaborative tasks,'' \emph{Autonomous Robots}, vol.~41, p.
  593–612, 2017.

\bibitem{RAVICHANDAR2019a}
H.~C. Ravichandar, D.~Trombetta, and A.~P. Dani, ``{Human Intention-Driven
  Learning Control for Trajectory Synchronization in Human-Robot Collaborative
  Tasks},'' \emph{IFAC-PapersOnLine}, vol.~51, no.~34, pp. 1 -- 7, 2019, 2nd
  IFAC Conference on Cyber-Physical and Human Systems CPHS 2018.

\bibitem{Codevilla2018}
F.~{Codevilla}, M.~{Müller}, A.~{López}, V.~{Koltun}, and A.~{Dosovitskiy},
  ``{End-to-End Driving Via Conditional Imitation Learning},'' in \emph{2018
  IEEE International Conference on Robotics and Automation (ICRA)}, 2018, pp.
  4693--4700.

\bibitem{Pan2020}
Y.~Pan, C.-A. Cheng, K.~Saigol, K.~Lee, X.~Yan, E.~A. Theodorou, and B.~Boots,
  ``Imitation learning for agile autonomous driving,'' \emph{The International
  Journal of Robotics Research}, vol.~39, no. 2-3, pp. 286--302, 2020.

\bibitem{Kaufmann2020}
\BIBentryALTinterwordspacing
E.~Kaufmann, A.~Loquercio, R.~Ranftl, M.~M{\"{u}}ller, V.~Koltun, and
  D.~Scaramuzza, ``{Deep Drone Acrobatics},'' \emph{CoRR}, vol. abs/2006.05768,
  2020. [Online]. Available: \url{https://arxiv.org/abs/2006.05768}
\BIBentrySTDinterwordspacing

\bibitem{Abbeel2010}
P.~Abbeel, A.~Coates, and A.~Y. Ng, ``{Autonomous Helicopter Aerobatics through
  Apprenticeship Learning},'' \emph{The International Journal of Robotics
  Research}, vol.~29, no.~13, pp. 1608--1639, 2010.

\bibitem{Farchy2013}
A.~Farchy, S.~Barrett, P.~MacAlpine, and P.~Stone, ``{Humanoid Robots Learning
  to Walk Faster: From the Real World to Simulation and Back},'' in
  \emph{Proceedings of the 2013 International Conference on Autonomous Agents
  and Multi-Agent Systems}, ser. AAMAS '13, 2013, p. 39–46.

\bibitem{Mericli2010}
Çetin Meriçli and M.~Veloso, ``Biped walk learning through playback and
  corrective demonstration,'' in \emph{In AAAI 2010: Twenty-Fourth Conference
  on Artificial Intelligence}, 2010.

\bibitem{Kolter2008}
J.~Kolter, P.~Abbeel, and A.~Ng, ``{Hierarchical Apprenticeship Learning with
  Application to Quadruped Locomotion},'' in \emph{Advances in Neural
  Information Processing Systems}, J.~Platt, D.~Koller, Y.~Singer, and
  S.~Roweis, Eds., vol.~20.\hskip 1em plus 0.5em minus 0.4em\relax Curran
  Associates, Inc., 2008, pp. 769--776.

\bibitem{Nakanishi2004}
J.~Nakanishi, J.~Morimoto, G.~Endo, G.~Cheng, S.~Schaal, and M.~Kawato,
  ``Learning from demonstration and adaptation of biped locomotion,''
  \emph{Robotics and Autonomous Systems}, vol.~47, no.~2, pp. 79 -- 91, 2004.

\bibitem{Kroemer2019}
O.~Kroemer, S.~Niekum, and G.~D. Konidaris, ``{A Review of Robot Learning for
  Manipulation: Challenges, Representations, and Algorithms},'' \emph{arXiv
  e-prints}, 2019.

\bibitem{Pomerleau1989}
D.~A. Pomerleau, \emph{ALVINN: An Autonomous Land Vehicle in a Neural
  Network}.\hskip 1em plus 0.5em minus 0.4em\relax San Francisco, CA, USA:
  Morgan Kaufmann Publishers Inc., 1989, p. 305–313.

\bibitem{Bojarski2016}
M.~{Bojarski}, D.~{Del Testa}, D.~{Dworakowski}, B.~{Firner}, B.~{Flepp},
  P.~{Goyal}, L.~D. {Jackel}, M.~{Monfort}, U.~{Muller}, J.~{Zhang},
  X.~{Zhang}, J.~{Zhao}, and K.~{Zieba}, ``{End to End Learning for
  Self-Driving Cars},'' \emph{arXiv e-prints}, Apr. 2016.

\bibitem{Levine2014}
S.~Levine and V.~Koltun, ``{Learning Complex Neural Network Policies with
  Trajectory Optimization},'' in \emph{Proceedings of the 31st International
  Conference on International Conference on Machine Learning - Volume 32}, ser.
  ICML'14, 2014, p. II–829–II–837.

\bibitem{Chen2018}
S.~{Chen}, K.~{Saulnier}, N.~{Atanasov}, D.~D. {Lee}, V.~{Kumar}, G.~J.
  {Pappas}, and M.~{Morari}, ``{Approximating Explicit Model Predictive Control
  Using Constrained Neural Networks},'' in \emph{2018 Annual American Control
  Conference (ACC)}, 2018, pp. 1520--1527.

\bibitem{Chen2019}
D.~Chen, B.~Zhou, V.~Koltun, and P.~Kr{\"{a}}henb{\"{u}}hl, ``{Learning by
  Cheating},'' in \emph{CoRL 2019, Osaka, Japan}, ser. Proceedings of Machine
  Learning Research, L.~P. Kaelbling, D.~Kragic, and K.~Sugiura, Eds., vol.
  100.\hskip 1em plus 0.5em minus 0.4em\relax {PMLR}, 2019, pp. 66--75.

\bibitem{Ross2011}
S.~Ross, G.~Gordon, and D.~Bagnell, ``{A Reduction of Imitation Learning and
  Structured Prediction to No-Regret Online Learning},'' in \emph{Proceedings
  of the Fourteenth International Conference on Artificial Intelligence and
  Statistics}, ser. Proceedings of Machine Learning Research, vol.~15, Fort
  Lauderdale, FL, USA, 11--13 Apr 2011, pp. 627--635.

\bibitem{Boyd2020}
M.~{Palan}, S.~{Barratt}, A.~{McCauley}, D.~{Sadigh}, V.~{Sindhwani}, and
  S.~{Boyd}, ``{Fitting a Linear Control Policy to Demonstrations with a Kalman
  Constraint},'' \emph{arXiv e-prints}, p. arXiv:2001.07572, Jan. 2020.

\bibitem{Kalman1964}
R.~K{\'a}lm{\'a}n, ``{When Is a Linear Control System Optimal},'' \emph{Journal
  of Basic Engineering}, vol.~86, pp. 51--60, 1964.

\bibitem{Havens2021}
A.~Havens and B.~Hu, ``{On Imitation Learning of Linear Control Policies:
  Enforcing Stability and Robustness Constraints via LMI Conditions},'' in
  \emph{2021 American Control Conference (ACC)}, 2021, pp. 882--887.

\bibitem{Astolfi2018}
M.~{Sassano} and A.~{Astolfi}, ``{A Local Separation Principle via Dynamic
  Approximate Feedback and Observer Linearization for a Class of Nonlinear
  Systems},'' \emph{IEEE Transactions on Automatic Control}, vol.~64, no.~1,
  pp. 111--126, 2019.

\bibitem{sultangazin2020}
A.~Sultangazin, L.~Pannocchi, L.~Fraile, and P.~Tabuada, ``{Watch and Learn:
  Learning to control feedback linearizable systems from expert
  demonstrations},'' in \emph{2022 IEEE International Conference on Robotics
  and Automation (ICRA)}, 2022, to appear.

\bibitem{Sultangazin2021}
A.~{Sultangazin}, L.~{Fraile}, and P.~{Tabuada}, ``Exploiting the experts:
  Learning to control unknown siso feedback linearizable systems from expert
  demonstrations,'' in \emph{60th IEEE Conference on Decision and Control},
  2021.

\bibitem{lucas2020}
\BIBentryALTinterwordspacing
L.~{Fraile}, M.~{Marchi}, and P.~{Tabuada}, ``{Data-driven Stabilization of
  SISO Feedback Linearizable Systems},'' \emph{arXiv e-prints}, p.
  arXiv:2003.14240, Mar. 2021. [Online]. Available:
  \url{https://arxiv.org/abs/2003.14240}
\BIBentrySTDinterwordspacing

\bibitem{Khalil2002}
H.~Khalil, \emph{Nonlinear Systems}, ser. Pearson Education.\hskip 1em plus
  0.5em minus 0.4em\relax Prentice Hall, 2002.

\bibitem{Nesic2008}
D.~Nesic, A.~R. Teel, and D.~Carnevale, ``{Explicit Computation of the Sampling
  Period in Emulation of Controllers for Nonlinear Sampled-Data Systems},''
  \emph{IEEE Transactions on Automatic Control}, vol.~54, no.~3, pp. 619--624,
  2009.

\bibitem{Isidori1995}
A.~Isidori, \emph{Nonlinear Control Systems}, ser. Communications and Control
  Engineering.\hskip 1em plus 0.5em minus 0.4em\relax Springer-Verlag London,
  1995.

\bibitem{Kabamba1987}
P.~T. Kabamba, ``{Control of Linear Systems Using Generalized Sampled-Data Hold
  Functions},'' \emph{IEEE Transactions on Automatic Control}, vol.~32, no.~9,
  pp. 772--783, 1987.

\bibitem{Sontag1998}
E.~D. Sontag, ``Comments on integral variants of {ISS},'' \emph{Systems \&
  Control Letters}, vol.~34, no.~1, pp. 93--100, 1998.

\bibitem{Geiselhart2014}
R.~Geiselhart, R.~H. Gielen, M.~Lazar, and F.~R. Wirth, ``{An alternative
  converse Lyapunov theorem for discrete-time systems},'' \emph{Systems \&
  Control Letters}, vol.~70, pp. 49 -- 59, 2014.

\bibitem{Teel2000}
A.~R. {Teel} and L.~{Praly}, ``{A smooth Lyapunov function from a class-KL
  estimate involving two positive semidefinite functions},'' \emph{ESAIM -
  Control Optimization and Calculus of Variations}, p. 313–367, 2000.

\bibitem{Antsaklis2006}
{Panos J. Antsaklis} and {Anthony N. Michel}, \emph{Linear Systems},
  1st~ed.\hskip 1em plus 0.5em minus 0.4em\relax Birkh{\"a}user Boston, 2006.

\bibitem{Boyd2004}
S.~Boyd and L.~Vandenberghe, \emph{Convex Optimization}.\hskip 1em plus 0.5em
  minus 0.4em\relax USA: Cambridge University Press, 2004.

\bibitem{Chang2018}
T.~H. Chang, L.~T. Watson, T.~C.~H. Lux, B.~Li, L.~Xu, A.~R. Butt, K.~W.
  Cameron, and Y.~Hong, ``{A Polynomial Time Algorithm for Multivariate
  Interpolation in Arbitrary Dimension via the Delaunay Triangulation},'' in
  \emph{Proceedings of the ACMSE 2018 Conference}, ser. ACMSE '18, New York,
  NY, USA, 2018.

\bibitem{Zhai2002}
{Guisheng Zhai}, {Bo Hu}, K.~{Yasuda}, and A.~N. {Michel}, ``Qualitative
  analysis of discrete-time switched systems,'' in \emph{Proceedings of the
  2002 American Control Conference}, vol.~3, 2002, pp. 1880--1885 vol.3.

\bibitem{Hartman2002}
P.~Hartman, \emph{Ordinary Differential Equations}, 2nd~ed.\hskip 1em plus
  0.5em minus 0.4em\relax Society for Industrial and Applied Mathematics, 2002.

\bibitem{Omohundro1990}
S.~M. Omohundro, ``Geometric learning algorithms,'' \emph{Physica D: Nonlinear
  Phenomena}, vol.~42, no.~1, pp. 307--321, 1990.

\bibitem{Faessler2015}
M.~Faessler, F.~Fontana, C.~Forster, and D.~Scaramuzza, ``Automatic
  re-initialization and failure recovery for aggressive flight with a monocular
  vision-based quadrotor,'' in \emph{2015 IEEE International Conference on
  Robotics and Automation (ICRA)}, 2015, pp. 1722--1729.

\bibitem{Hehn2011}
M.~Hehn and R.~D'Andrea, ``{Quadrocopter Trajectory Generation and Control},''
  \emph{IFAC Proceedings Volumes}, vol.~44, no.~1, pp. 1485--1491, 2011, 18th
  IFAC World Congress.

\bibitem{Mellinger2011}
D.~{Mellinger} and V.~{Kumar}, ``Minimum snap trajectory generation and control
  for quadrotors,'' in \emph{2011 IEEE International Conference on Robotics and
  Automation}, 2011, pp. 2520--2525.

\bibitem{Hauser1992}
J.~Hauser, S.~Sastry, and P.~Kokotovic, ``Nonlinear control via approximate
  input-output linearization: the ball and beam example,'' \emph{IEEE
  Transactions on Automatic Control}, vol.~37, no.~3, pp. 392--398, 1992.

\bibitem{Barbu1997}
C.~Barbu, R.~Sepulchre, W.~Lin, and P.~Kokotovic, ``Global asymptotic
  stabilization of the ball-and-beam system,'' in \emph{Proceedings of the 36th
  IEEE Conference on Decision and Control}, vol.~3, 1997, pp. 2351--2355 vol.3.

\bibitem{Isidori1990}
A.~Isidori and C.~Byrnes, ``Output regulation of nonlinear systems,''
  \emph{IEEE Transactions on Automatic Control}, vol.~35, no.~2, pp. 131--140,
  1990.

\end{thebibliography}
\bibliographystyle{IEEEtran}

\end{document}